\newtheorem{thm}{Theorem}[section]
\newtheorem{cor}[thm]{Corollary}
\newtheorem{lem}[thm]{Lemma}
\newtheorem{prop}[thm]{Proposition}
\theoremstyle{definition}
\newtheorem{defn}[thm]{Definition}
\theoremstyle{remark}
\newtheorem{rem}[thm]{Remark}
\newtheorem{alg}{Algorithm}
\newcommand{\reals}{\mathbb{R}}
\newcommand{\complex}{\mathbb{C}}
\newcommand{\tr}{\mathsf{tr}}
\newcommand{\dete}{\mathsf{det}}
\newcommand{\var}{\mathsf{var}}
\newcommand{\tran}{^{\top}}
\newcommand{\pP}{\mathbb{P}}
\newcommand{\pE}{\mathbb{E}}
\newcommand{\im}{\mathsf{Im}}
\newcommand{\re}{\mathsf{Re}}
\newcommand{\argu}{\mathsf{arg}}
\newcommand{\Argu}{\mathsf{Arg}}
\newcommand{\diag}{\mathsf{diag}}
\begin{document}

\title{Exact Simulation of Wishart Multidimensional Stochastic Volatility Model}
\date{\today}
\author{
    Chulmin Kang\thanks{National Institute for Mathematical Sciences, Republic of Korea, E-mail: ckang@nims.re.kr}
    \and
    Wanmo Kang\thanks{Department of Mathematical Sciences, KAIST, Republic of Korea, E-mail: wanmo.kang@kaist.edu}
 } \maketitle

\begin{abstract}
In this article, we propose an exact simulation method of the Wishart multidimensional stochastic volatility (WMSV) model, which was recently introduced
by Da Fonseca et al. \cite{DGT08}. Our method is based on
analysis of the conditional characteristic function of the log-price given volatility level. In particular, we found an explicit expression for the
conditional characteristic function for the Heston model. We perform numerical experiments to demonstrate the performance and accuracy of our method.
As a result of numerical experiments, it is shown that our new method is much faster and reliable than Euler discretization method.
\end{abstract}

\textbf{Keywords}: Wishart processes, stochastic volatility, Monte-Carlo method, exact simulation

\section{Introduction}
Ever since the Heston's stochastic volatility model \cite{He93} was introduced in $1993$, it has been the most important and widely used model among
stochastic volatility models. Its popularity relies on the clear financial interpretation of parameters and computational tractability of the model.
Heston \cite{He93} found the characteristic function of logarithmic asset price in a closed-form and showed that European call options can be priced by
inverting the characteristic function.

Despite its popularity, extensive empirical research has documented limitations of the Heston model
\cite{Ba96,Ba00,CW07,CHJ09,DG11}. The most critical deficiency of the model is that it can not generate realistic term structure of
volatility smiles; Heston model provides too flat implied volatility surface to capture reality. But empirical studies revealed that the implied
volatility curve for a short maturity has rather steep slope and it is convex, and that for a long maturity tends to be linear
\cite{CHJ09,DG11}. Consequently, much effort has been focused on generalizing the Heston model so as to accommodate such stylized facts.
There are two approaches to generalize the Heston model. The first one is to add jumps in the dynamics of stock return, the volatility, or both of
them \cite{Ba96,Ba00,DPS00}. And the other stream has been investigating the multifactor nature of the implied volatility
\cite{Ba00,CHJ09,DGT08,DPS00}.

Among multifactor stochastic volatility models, the Wishart multidimensional stochastic volatility (WMSV) model is the most flexible one,
and it matches the term structure of implied volatilities well \cite{DGT08,DG11}. Term structure of the realized volatilities
in this model is described by a positive semidefinite matrix valued stochastic process, namely the Wishart
process which was developed by Bru \cite{Br91} and introduced in the financial literature by Gourieroux and Sufana \cite{GS10}.
The dependence between the asset price and the volatility factor is also characterized by a square matrix. Such a matrix specification of the
model make it possible to capture the stylized facts observed in the option markets. The model can fit both the long-term volatility level
and the short-term skew at the same time. In addition, the model exhibits the stochastic leverage effect, and it makes the model adequate to deal with a stochastic skew effect.
Even with these flexible parameter specifications, it is still analytically tractable; the problem of pricing European vanilla options can be handled through the
transform methods of Duffie et al. \cite{DPS00} and the FFT methods of Carr and Madan \cite{CM99}.

Even though analytic aspects of the WMSV model are well explored \cite{BBE08,DGT08,DG11,GP11}, there are only few studies on the
simulation of the model. Gauthier and Possama\"{i} \cite{GP12} proposed some discretization schemes beyond the crude Euler scheme, but their
schemes are less satisfactory in terms of accuracy. Their schemes entails severe bias errors, and its
accuracy is sensitive to the model parameters.

In this paper, we propose an exact simulation method of the WMSV model, which does not suffer from discretization bias. Our new simulation method
for the WMSV model is motivated by the exact simulation method of Broadie and Kaya \cite{BK06} for the Heston model. Their key observation was that
the Heston model can be sampled exactly, provided that the endpoints and the integral of variance process are sampled exactly. For this purpose, they
derived the conditional characteristic function for the integral of the variance process up to the endpoint conditional on its endpoints, and they used it to sample the
integral by Fourier inversion techniques. We take a similar, but rather direct approach to generate a sample from the WMSV model.
We first sample the terminal value for the volatility factor process. Then we use the Fourier inversion techniques to generate the log-price
of the stock conditional on the given terminal value of the volatility factor.

In order to apply the Fourier inversion techniques, it is necessary to find the relevant characteristic function. In our case, it is the conditional
characteristic function of the log-price given the terminal value of volatility factor. We prove that it can be obtained by solving a certain system
of ordinary differential equations. In particular, we provide an explicit formula of the conditional characteristic function for the Heston model.
In general, the system of ordinary differential equations does not admit a closed-form solution due to non-commutativity of matrix multiplications,
but it can be efficiently solved by numerical methods.

The rest of this paper is organized as follows. Section \ref{sec:theory} introduces the WMSV model specifications and derive the conditional
characteristic function of the log-price. Section \ref{sec:BK} provides a brief review on the Broadie-Kaya's exact simulation method of the Heston
model. Section \ref{sec:exact} presents our exact simulation method for the WMSV model in detail. In Section \ref{sec:numer}, we give some numerical
results and compare our method with a standard discretization method and Broadie-Kaya's method. Section \ref{sec:con} conclude the paper.
Some detailed derivations are deferred to appendices.

\section{The Wishart Multidimensional Stochastic Volatility Model}\label{sec:theory}

Within the WMSV model, the dynamics of asset price, under the risk neutral measure, is described by
\begin{equation} \label{eq:asset}
    \frac{dS_t}{S_t} = r dt + \tr\Big[\sqrt{X_t}\big(dW_t R^{\top} + dZ_t \sqrt{I_d - RR^{\top}}\big)\Big],
        \hspace{0.5cm} S_0 = s > 0,
\end{equation}
where $\tr$ is the trace operator, $r$ is a constant which represents the risk neutral drift, $W$ and $Z$ are independent $d \times d$ standard matrix Brownian motions,
i.e., all the entries of $W$ and $Z$ are independent standard $1$-dimensional Brownian motions. In this specification, the volatility is determined by the
$d\times d$ symmetric positive semidefinite matrix-valued process $X_t$. In the followings, we denote by $S_d^{++} (S_d^+)$ the set of symmetric positive (semi)definite matrices.
The $d \times d$ matrix $R$ specifies the correlation between
asset price and volatility factors, and it determines the skewness of the distribution of the return.

The volatility factor process is assumed to be a Wishart process which solves the equation
\begin{equation}\label{eq:wishart}
    dX_t = (\delta \Sigma^{\top} \Sigma + H X_t + X_t H^{\top})dt + \sqrt{X_t}dW_t \Sigma + \Sigma^{\top} dW_t^{\top} \sqrt{X_t},
        \hspace{0.5cm} X_0 = x \in S_d^+,
\end{equation}
where $\Sigma$, $H$ are $d \times d$ matrices, and $\delta \ge d - 1$.
The parameter restriction $\delta \ge d - 1$ ensures the existence of the unique weak solution of (\ref{eq:wishart}) \cite{CFMT11}.
The Wishart process is a matrix analog of the square-root mean-reverting process. In order to grant the typical mean-reverting feature of the volatility,
the matrix $H$ is assumed to be negative definite.

Throughout the paper, we express the asset price
in terms of the log-price $Y_t = \log(S_t)$, so that (\ref{eq:asset}) becomes
\begin{equation}\label{eq:logprice}
    dY_t = \big(r - \frac{1}{2}\tr[X_t]\big) dt + \tr\Big[\sqrt{X_t}\big(dW_t R^{\top} + dZ_t \sqrt{I_d - RR^{\top}}\big)\Big],
        \hspace{0.5cm} Y_0 = y.
\end{equation}
We first review the affine transform formula for the WMSV model \cite{DGT08}, and we derive the conditional Laplace transform of
log-price $Y_T$ given the terminal volatility $X_T$ using the affine transform formula and the change of measure techniques.

\subsection{Laplace Transform of $Y_T$}
Due to the affine nature of the WMSV model, the Laplace transform of the log-price process is exponentially affine in the initial values $(x, y)$.
In particular, the Laplace transform is of the following form.

\begin{prop}[Da Fonseca et al. \cite{DGT08}]
The Laplace transform of the log-price $Y_T$ is given by
\begin{equation}\label{eq:Laplace}
    \pE_{x,y}\Big[ e^{-u Y_T} \Big] = e^{- \phi(0,u) - \tr[\psi(0,u) x] - u y}, \hspace{0.5cm} \text{ if LHS is finite},
\end{equation}
where $(\phi, \psi)$ is the solution of
\begin{equation}\label{eq:RDE}
    \left\{
        \begin{array}{ll}
            \partial _t \psi(t, u) =& 2 \psi(t, u) \Sigma^{\top} \Sigma \psi(t, u)\\
                &- (H^{\top} - u R\Sigma)\psi(t,u) - \psi(t,u)(H - u\Sigma^{\top} R^{\top}) + \frac{u(u+1)}{2} I_d,\\
            \partial_t \phi(t, u) =& -\delta \tr[\psi(t,u)\Sigma^{\top}\Sigma] - ur,
        \end{array}
    \right.
\end{equation}
for $0 \le t \le T$, with the terminal value $\psi(T, u) = 0$ and $\phi(T, u) = 0$.
\end{prop}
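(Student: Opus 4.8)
The plan is to exploit the affine--Markov structure of the pair $(X_t,Y_t)$ and to verify the claimed exponential-affine form by a martingale argument, without solving anything explicitly. Since $(X_t,Y_t)$ is a time-homogeneous Markov process, $M_t := \pE_{x,y}\big[e^{-uY_T}\mid\mathcal F_t\big]$ is a martingale, and by the affine structure one expects $M_t = F(t,X_t,Y_t)$ with $F(t,x,y)=\exp\big(-\phi(t,u)-\tr[\psi(t,u)x]-uy\big)$ for a deterministic scalar $\phi(\cdot,u)$ and a symmetric-matrix-valued $\psi(\cdot,u)$ (we may take $\psi$ symmetric with no loss, as it is paired with the symmetric $x$ inside a trace). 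The strategy is then: (i) apply It\^o's formula to $F(t,X_t,Y_t)$; (ii) require the finite-variation part to vanish, which is exactly the (local) martingale condition; (iii) separate the terms that are linear in $x$ from those independent of $x$, reading off the system for $(\psi,\phi)$; and (iv) fix the terminal data from $M_T=F(T,X_T,Y_T)=e^{-uY_T}$. Equivalently, this is a special case of the general affine transform theorem of Duffie et al.\ \cite{DPS00}, but the direct It\^o verification is the most transparent route.

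For step (i) I would write $F=e^{f}$ with $f=-\phi-\tr[\psi X_t]-uY_t$ and compute $dF=F\big(df+\tfrac12\,d\langle f\rangle\big)$. Substituting the drifts of (\ref{eq:wishart}) and (\ref{eq:logprice}) contributes the first-order drift
\[
-\partial_t\phi-\tr[\partial_t\psi\,x]-\tr\big[\psi(\delta\Sigma^{\top}\Sigma+Hx+xH^{\top})\big]-u\big(r-\tfrac12\tr[x]\big),
\]
while the second-order term $\tfrac12\,d\langle f\rangle$ requires the covariations of the martingale parts. The cleanest way to handle these is to expand $\tr[\psi(\sqrt{X_t}\,dW_t\Sigma+\Sigma^{\top}dW_t^{\top}\sqrt{X_t})]$ and $u\,\tr[\sqrt{X_t}(dW_tR^{\top}+dZ_t\sqrt{I_d-RR^{\top}})]$ into sums over the scalar entries $W_{ij},Z_{ij}$ and to contract using $d\langle W_{ij},W_{kl}\rangle=\delta_{ik}\delta_{jl}\,dt$, $d\langle Z_{ij},Z_{kl}\rangle=\delta_{ik}\delta_{jl}\,dt$, and $d\langle W,Z\rangle=0$.

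Carrying out these contractions yields three groups of second-order terms, each of which I expect to collapse into a trace against $x$: the $X$-martingale against itself gives $2\,\tr[\psi\Sigma^{\top}\Sigma\psi\,x]$; the $Y$-martingale against itself gives $\tfrac12 u^2\,d\langle Y\rangle=\tfrac12 u^2\tr[x]\,dt$, using $d\langle Y\rangle=\tr[X_t]\,dt$ (which follows from $RR^{\top}+(I_d-RR^{\top})=I_d$); and the cross-covariation between the common $W$ driving both $dX_t$ and $dY_t$ gives $u\,\tr[(R\Sigma\psi+\psi\Sigma^{\top}R^{\top})x]$. Setting the total finite-variation part to zero and matching the coefficient of $\tr[\,\cdot\,x]$ against the $x$-free remainder reproduces (\ref{eq:RDE}) exactly: for $\psi$, the quadratic $2\psi\Sigma^{\top}\Sigma\psi$, the linear $-(H^{\top}-uR\Sigma)\psi-\psi(H-u\Sigma^{\top}R^{\top})$, and the inhomogeneous $\tfrac{u(u+1)}2 I_d$ (the $\tfrac{u}2\tr[x]$ from the drift of $Y$ combined with the $\tfrac{u^2}2\tr[x]$ from $d\langle Y\rangle$); and for $\phi$, the $x$-free part $-\delta\tr[\psi\Sigma^{\top}\Sigma]-ur$. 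The terminal condition $F(T,x,y)=e^{-uy}$ forces $\psi(T,u)=0$ and $\phi(T,u)=0$. Finally, under the stated finiteness hypothesis the local martingale $M_t$ is a true martingale, so evaluating at $t=0$ gives $F(0,x,y)=\pE_{x,y}[e^{-uY_T}]$, which is (\ref{eq:Laplace}).

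The step I expect to be the main obstacle is the matrix quadratic-variation bookkeeping, and in particular the cross term $u(R\Sigma\psi+\psi\Sigma^{\top}R^{\top})$: it arises from the single matrix Brownian motion $W$ that drives both the volatility factor and, through $R$, the return, so it must be tracked through the index contractions rather than guessed. The symmetry reduction for $\psi$ and the sign conventions also need care, since the transform is written with $e^{-uY_T}$, so $u$ enters with the opposite sign to the usual characteristic-function convention.
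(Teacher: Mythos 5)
Your verification argument is correct, and all of the quadratic-variation bookkeeping checks out: with the contraction rule $d\langle \tr[A\,dW],\tr[B\,dW]\rangle=\tr[AB^{\top}]\,dt$ one gets $2\tr[\psi\Sigma^{\top}\Sigma\psi\,x]$ from the $X$-noise, $\tfrac{u^2}{2}\tr[x]$ from the $Y$-noise (using $RR^{\top}+(I_d-RR^{\top})=I_d$), and $u\,\tr[(R\Sigma\psi+\psi\Sigma^{\top}R^{\top})x]$ from the shared $W$, which together with the drifts reproduces \eqref{eq:RDE} exactly. Note, however, that the paper does not prove this proposition at all: it is imported verbatim from Da Fonseca et al.\ \cite{DGT08}. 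So there is no ``paper proof'' to match against — but it is worth pointing out that precisely your computation appears later in the paper, inside the proof of Theorem \ref{thm:main}, where the authors show that $Z_t=\exp\{-\phi(t,u)+\phi(0,u)-\tr[\psi(t,u)X_t]+\tr[\psi(0,u)x]-u(Y_t-y)\}$ equals $\mathcal{E}(M)_t$ with $\langle M\rangle_t=\int_0^t\tr[(4\psi\Sigma^{\top}\Sigma\psi+2uR\Sigma\psi+2u\psi\Sigma^{\top}R^{\top}+u^2I_d)X_s]\,ds$; your three groups of second-order terms are exactly the three pieces of that bracket. The one step you gloss over is the passage from local martingale to true martingale: finiteness of $\pE_{x,y}[e^{-uY_T}]$ does not by itself upgrade the positive local martingale $F(t,X_t,Y_t)$ to a martingale — positivity only gives the supermartingale inequality $F(0,x,y)\ge\pE_{x,y}[e^{-uY_T}]$, and the reverse inequality requires an additional argument (e.g.\ existence of the Riccati solution on all of $[0,T]$ together with the exponential-moment results for affine processes as in \cite{CFMT11}, which is essentially what the cited reference supplies). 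Since the statement carries the caveat ``if LHS is finite'' and the paper itself defers this to \cite{DGT08}, this is a forgivable but genuine gap in a self-contained proof.
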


\begin{rem}
In this paper, we take the equations (\ref{eq:RDE}) as backward equations for notational convenience. With this backward equations,
the conditional Laplace transform given $\mathcal{F}_t$ can be written as
\begin{equation*}
    \pE_{x,y}\Big[e^{- u Y_T}\Big|\mathcal{F}_t\Big] = \pE_{\scriptscriptstyle X_t, Y_t} \Big[e^{- u Y_{T-t}}\Big]
        = e^{- \tilde{\phi}(0,u) - \tr[\tilde{\psi}(0,u) X_t] - u Y_t},
\end{equation*}
where $(\tilde{\phi}, \tilde{\psi})$ is the solution of (\ref{eq:RDE}) with $\tilde{\psi}(T-t, u) = 0$ and $\tilde{\phi}(T-t, u) = 0$.
Notice that $\psi(\cdot + t, u) = \tilde{\psi}(\cdot, u)$ and $\phi(\cdot + t, u) = \tilde{\phi}(\cdot, u)$ by the uniqueness of the solution. So we have
$\tilde{\psi}(0, u)= \psi(t, u)$ and $\tilde{\phi}(0) = \phi(t)$. Hence we have
\begin{equation}\label{eq:cATS}
    \pE_{x,y}\Big[e^{- u Y_T}\Big|\mathcal{F}_t\Big] = e^{- \phi(t,u) - \tr[\psi(t,u) X_t] - u Y_t}.
\end{equation}
\end{rem}

\subsection{Conditional Laplace Transform of $Y_T$ given $X_T$}\label{sec:condLaplace}
The aim of this subsection is to find the conditional Laplace transform of $Y_T$ given $X_T = x_{\scriptscriptstyle T} \in S_d^{++}$. We apply the affine
transform formula (\ref{eq:cATS}) and the change of measure technique (e.g. see Theorem XI.3.2 of \cite{RY99})
to find the conditional Laplace transform. From this section, we assume that
$\Sigma$ is nonsingular, and $\delta > d-1$ to ensure the nonsingularity of Wishart process $X_t$.

To state and prove the main result of this section, it is necessary to recall the definitions of the noncentral Wishart distribution and some multivariate special functions.
We collect them in Appendix \ref{sec:def}.

\begin{thm}\label{thm:main}
The conditional Laplace transform of log-price $Y_T$ given $X_T = x_{\scriptscriptstyle T} \in S_d^{++}$ satisfies
\begin{eqnarray}
    \lefteqn{ \pE_{x,y} \Big[e^{- u Y_T}\Big|X_T = x_{\scriptscriptstyle T} \Big] = \left(\frac{\dete[V(0,0)]}{\dete[V(0,u)]}\right)^{\delta/2}
        \exp\Big\{\textstyle  - \phi(0,u) - u y  \Big\}}\nonumber\\
    & & \times \exp\Big\{\textstyle -\frac{1}{2}
        \tr\big[ (2 \psi(0, u) + \Psi(0, u) V(0,u)^{-1} \Psi(0, u)^{\top} - \Psi(0,0) V(0,0)^{-1} \Psi(0,0)^{\top}) x\big]\Big\}\nonumber\\
    & & \times \exp\Big\{\textstyle  -\frac{1}{2} \tr\big[ (V(0,u)^{-1} - V(0,0)^{-1})x_{\scriptscriptstyle T} \big]\Big\} \label{eq:condChf}\\
    & & \times \frac{ {}_0 F_1\Big(\textstyle \frac{1}{2}\delta ;
        \frac{1}{4} V(0,u)^{-1}\Psi(0,u)^{\top} x \Psi(0,u) V(0,u)^{-1} x_{\scriptscriptstyle T}\Big)}
        {{}_0 F_1\Big(\textstyle \frac{1}{2}\delta ;
        \frac{1}{4} V(0,0)^{-1} \Psi(0,0)^{\top} x \Psi(0,0) V(0,0)^{-1} x_{\scriptscriptstyle T}\Big) },\nonumber
\end{eqnarray}
where ${}_0 F_1$ is the hypergeometric function of matrix argument defined in Appendix \ref{sec:def}, the matrix-valued functions $\psi$, $\Psi$, $V$, and the real-valued function $\phi$ are the solution of the system of ordinary
differential equations:
\begin{equation}\label{eq:equations}
    \left\{
        \begin{array}{rcl}
            \partial _t \psi(t, u) &=& 2 \psi(t, u) \Sigma^{\top} \Sigma \psi(t, u)\\
            & &  - (H^{\top} - u R\Sigma)\psi(t,u) - \psi(t,u)(H - u\Sigma^{\top} R^{\top}) + \frac{u(u+1)}{2} I_d,\\
            \partial_t \phi(t,u) &=& -\delta \tr[\psi(s,u) \Sigma^{\top}\Sigma] - u r,\\
            \partial_t \Psi(t,u) &=& - (H^{\top} - u R \Sigma  - 2\psi(t,u) \Sigma^{\top}\Sigma) \Psi(t,u),\\
            \partial_t V(t,u) &=& - \Psi(t,u)^{\top} \Sigma^{\top}\Sigma \Psi(t,u),\\
        \end{array}
    \right.
\end{equation}
for $0 \le t \le T$, with terminal values $\psi(T, u) = V(T, u) = 0$, $\Psi(T, u) = I_d$, and $\phi(T, u) = 0$.
\end{thm}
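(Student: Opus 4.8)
The plan is to express the conditional Laplace transform as a ratio of two transition densities of Wishart processes, using a change of probability measure generated by the affine martingale (\ref{eq:cATS}). Fix $u$ in the domain where $\pE_{x,y}[e^{-uY_T}]$ is finite, so that $M_t := \pE_{x,y}[e^{-uY_T}\,|\,\mathcal{F}_t] = e^{-\phi(t,u)-\tr[\psi(t,u)X_t]-uY_t}$ is a strictly positive $\mathbb{P}$-martingale with $M_T = e^{-uY_T}$. First I would define a new measure $\pQ$ (depending on $u$) by $d\pQ/d\pP|_{\mathcal{F}_T} = M_T/M_0$. Then for every Borel set $A$,
\[
    \pE_{x,y}\big[e^{-uY_T}\indi_{\{X_T\in A\}}\big] = M_0\,\pEQ\big[\indi_{\{X_T\in A\}}\big] = M_0\,\pQ(X_T\in A),
\]
so that, writing $q$ and $p$ for the densities of $X_T$ under $\pQ$ and $\pP$,
\[
    \pE_{x,y}\big[e^{-uY_T}\,\big|\,X_T = x_{\scriptscriptstyle T}\big] = M_0\,\frac{q(x_{\scriptscriptstyle T})}{p(x_{\scriptscriptstyle T})}.
\]
The prefactor $M_0 = e^{-\phi(0,u)-\tr[\psi(0,u)x]-uy}$ already supplies the $e^{-\phi(0,u)-uy}$ term and, after writing $\tr[\psi(0,u)x] = \tfrac12\tr[2\psi(0,u)x]$, the $2\psi(0,u)$ contribution inside the second exponential of (\ref{eq:condChf}).

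Next I would identify the law of $X$ under $\pQ$. Applying It\^o's formula to $\log M_t$ and using the Riccati equation for $\psi$ to cancel the finite-variation terms, the martingale part of $M$ is an explicit stochastic integral against $W$ (the $Z$-integral does not affect $X$). Girsanov's theorem then shows that $X$ is again a Wishart-type diffusion under $\pQ$, with the same diffusion coefficient and the same dimension $\delta$, but with the mean-reversion matrix $H$ replaced by the time-dependent matrix $H - u\Sigma^{\top}R^{\top} - 2\Sigma^{\top}\Sigma\,\psi(t,u)$. Using the symmetry of $\psi$ and $\Sigma^{\top}\Sigma$, the transpose of this matrix is exactly the coefficient $H^{\top} - uR\Sigma - 2\psi(t,u)\Sigma^{\top}\Sigma$ appearing in the linear ODE for $\Psi$ in (\ref{eq:equations}); hence $\Psi(t,u)$ is the associated state-transition (fundamental) matrix, and $V(t,u) = \int_t^T \Psi(s,u)^{\top}\Sigma^{\top}\Sigma\,\Psi(s,u)\,ds$ is the associated integrated covariance, as one reads directly from the $\Psi$- and $V$-equations and their terminal values.

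Because a (time-inhomogeneous) Wishart diffusion has a noncentral Wishart transition law, both $p$ and $q$ are noncentral Wishart densities of the form recalled in Appendix \ref{sec:def}. For $u=0$ the Riccati equation forces $\psi(\cdot,0)\equiv 0$, so the modified drift reduces to $H$ and the $\pP$-law of $X_T$ is governed by the scale $V(0,0)$ and the noncentrality built from $\Psi(0,0)^{\top}x\,\Psi(0,0)$, while under $\pQ$ the corresponding objects are $V(0,u)$ and $\Psi(0,u)^{\top}x\,\Psi(0,u)$. Forming the ratio $q(x_{\scriptscriptstyle T})/p(x_{\scriptscriptstyle T})$, the common factors $(\dete[x_{\scriptscriptstyle T}])^{(\delta-d-1)/2}$, the multivariate gamma $\Gamma_d(\delta/2)$, and the powers of $2$ cancel; the normalising determinants produce $(\dete[V(0,0)]/\dete[V(0,u)])^{\delta/2}$; the factors $e^{-\frac12\tr[V^{-1}x_{\scriptscriptstyle T}]}$ produce the $x_{\scriptscriptstyle T}$-term; the noncentrality normalisers $e^{-\frac12\tr[\Psi V^{-1}\Psi^{\top}x]}$ produce the remaining $x$-terms of (\ref{eq:condChf}); and the two ${}_0F_1$-factors give the displayed ratio of hypergeometric functions. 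Multiplying by $M_0$ and collecting terms then yields (\ref{eq:condChf}).

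The hard part will be the second step: carrying out the Girsanov change of measure for the matrix Brownian motion $W$ and verifying that the drift correction is exactly $-u\Sigma^{\top}R^{\top} - 2\Sigma^{\top}\Sigma\,\psi(t,u)$, with all transposes and the symmetry of $\psi$ correctly accounted for, and then matching the mean and covariance of the resulting time-inhomogeneous noncentral Wishart law with the solutions $\Psi(0,u)$ and $V(0,u)$ of (\ref{eq:equations}). A secondary technical point is to justify that $M$ is a true (not merely local) martingale on the relevant range of $u$, so that $\pQ$ is a bona fide probability measure; this is handled on the domain where the Laplace transform in (\ref{eq:Laplace}) is finite.
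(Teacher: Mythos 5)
Your proposal is correct and follows essentially the same route as the paper: the measure change generated by the affine martingale, Girsanov to identify $X$ under the new measure as a Wishart process with time-varying drift $H - u\Sigma^{\top}R^{\top} - 2\Sigma^{\top}\Sigma\,\psi(t,u)$, identification of the noncentral Wishart transition laws via $\Psi$ and $V$ (the paper isolates this as Proposition \ref{prop:wisharttimelaplace} in Appendix \ref{sec:wisharttime}), and the final ratio of densities multiplied by $M_0$. The steps you flag as remaining work (the explicit integration-by-parts computation showing $Z=\mathcal{E}(M)$ and the drift correction) are exactly the computations the paper carries out.
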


\begin{proof}
Using the affine transform formula (\ref{eq:cATS}), we define a positive martingale
\begin{eqnarray*}
    Z_t &=& \pE_{x,y}\Big[ \exp\big\{ - u Y_T + \phi(0,u) + \tr[\psi(0,u)x] + u y\big\}\Big | \mathcal{F}_t\Big]\\
        &=& \exp\big\{ - \phi(t,u) + \phi(0,u) -\tr[\psi(t,u)X_t] + \tr[\psi(0,u)x] - u(Y_t - y)\big\},
\end{eqnarray*}
for $0 \le t \le T$, where $(\phi, \psi)$ is the solution of the equations (\ref{eq:RDE}). Since $Z_0 = 1$ and $Z_t > 0$ for
all $0 \le t \le T$, it can be used as a Radon-Nikodym derivative process. We define an equivalent measure $\tilde{\pP}$ by
\begin{equation*}
    \frac{d \tilde{\pP}}{d \pP} = Z_T, \hspace{0.7cm} \text{ on } \hspace{0.5cm} \mathcal{F}_T.
\end{equation*}
In order to apply Girsanov theorem, we need to find a martingale $(M_t)_{\scriptscriptstyle 0 \le t \le T}$ such that
$Z_t = \mathcal{E}(M)_t$. We apply the integration by parts formula to have
\begin{eqnarray*}
    \lefteqn{\tr[\psi(t,u) X_t] - \tr[\psi(0,u) x] = \int_0^t \tr[\partial_s \psi(s,u) X_s] ds + \int_0^t \tr[\psi(s,u) dX_s]}\\
        &=& \int_0^t  \tr\big[(\partial_s \psi(s,u) + H^{\top} \psi(s,u) + \psi(s,u) H)X_s\big] ds\\
        & & + \int_0^t \delta \tr[\psi(s,u)\Sigma^{\top}\Sigma] ds + \int_0^t \tr\big[ 2\Sigma \psi(s,u) \sqrt{X_s} dW_s\big],
\end{eqnarray*}
and
\begin{equation*}
    u(Y_t - y) = \int^t_0 \big( u r - \frac{1}{2} \tr[u X_s] \big) ds + \int_0^t \tr\big[ u R^{\top}\sqrt{X_s} dW_s\big]
        + \int_0^t \tr\big[ u \sqrt{I_d - RR^{\top}}\sqrt{X_s} dZ_s\big].
\end{equation*}
Since $(\phi, \psi)$ satisfies the equations (\ref{eq:RDE}),
\begin{eqnarray*}
    \lefteqn{-\phi(t,u) + \phi(0,u) - \tr[\psi(t,u) X_t] + \tr[\psi(0,u) x] - u (Y_t - y)}\\
    &=& - \int_0^t \tr\big[(2\Sigma\psi(s,u) + u R^{\top})\sqrt{X_s}dW_s\big] - \int_0^t \tr\big[ u \sqrt{I_d - RR^{\top}}\sqrt{X_s} dZ_s\big]\\
    & & - \frac{1}{2} \int_0^t \tr\big[ (4\psi(s,u)\Sigma^{\top}\Sigma \psi(s,u)
        + 2uR\Sigma\psi(s,u) + 2u\psi(s,u) \Sigma^{\top}R^{\top} + u^2 I_d) X_s\big] ds .
\end{eqnarray*}
Set
\begin{equation*}
    M = - \int_0^{\cdot} \tr\big[(2\Sigma\psi(s,u) + u R^{\top})\sqrt{X_s}dW_s\big]
        - \int_0^{\cdot} \tr\big[ u \sqrt{I_d - RR^{\top}}\sqrt{X_s} dZ_s \big].
\end{equation*}
Then
\begin{equation*}
    \langle M \rangle_t =  \int_0^t \tr\big[ (4\psi(s,u)\Sigma^{\top}\Sigma \psi(s,u)
        + 2uR\Sigma\psi(s,u) + 2u\psi(s,u) \Sigma^{\top}R^{\top} + u^2 I_d) X_s\big] ds.
\end{equation*}
Therefore, we have $Z_t = \mathcal{E}(M)_t$, $0 \le t \le T$. By Girsanov
theorem, the dynamics of $X$ is as follows
\begin{equation*}
    dX_t = (\delta\Sigma^{\top} + H(t,u) X_t + X_t H(t,u)^{\top})dt + \sqrt{X_t}d\tilde{W}_t\Sigma + \Sigma^{\top}d\tilde{W}_t^{\top} \sqrt{X_t},
\end{equation*}
where $H(t,u) = H - u \Sigma^{\top} R^{\top} - 2\Sigma^{\top}\Sigma \psi(t,u)$, and $\tilde{W}$ is a $d \times d$ matrix Brownian motion under
$\tilde{\pP}$. Therefore $X$ is a Wishart process with time-varying linear drift in
the sense of Appendix \ref{sec:wisharttime}.

According to Proposition \ref{prop:wisharttimelaplace} in Appendix \ref{sec:wisharttime}, under $\tilde{\pP}$, $X_T$ has the noncentral Wishart distribution
$\mathcal{W}_{d}(\delta, V(0, u), V(0, u)^{-1} \Psi(0, u)^{\top} x \Psi(0,u))$,
and it has the transition density from time $0$ to $T$ :
\begin{eqnarray}
    p_{\scriptscriptstyle 0, T}(x, x_{\scriptscriptstyle T};u) &=& \frac{(\dete [x_{\scriptscriptstyle T}])^{(\delta - d - 1)/2}}
        {2^{d\delta/2}\Gamma_d(\textstyle \frac{1}{2} \delta) (\dete [V(0,u)])^{\delta/2}}
        \exp\Big\{\textstyle - \frac{1}{2} \tr\big[ V(0,u)^{-1}(x_{\scriptscriptstyle T} + \Psi(0,u)^{\top} x \Psi(0,u))\big]\Big\}
        \nonumber\\
    & & \times {}_0 F_1\Big(\textstyle \frac{1}{2}\delta ;
        \frac{1}{4} V(0,u)^{-1}\Psi(0,u)^{\top} x \Psi(0,u) V(0,u)^{-1} x_{\scriptscriptstyle T}\Big),\label{eq:qden}
\end{eqnarray}
where $\Psi(t,u)$ and $V(t,u)$ solve the corresponding equations in (\ref{eq:equations}).
Notice that the transition density of $X$ under $\pP$ can be obtained by taking $u = 0$ because $H(t,0) = H$ for all $0 \le t \le T$.

Now we are ready to compute the conditional Laplace transform of $Y_T$ given $X_T = x_{\scriptscriptstyle T}$. Observe that
\begin{eqnarray*}
    \lefteqn{\int_{S_d^{++}} \pE_{x,y} \big[ e^{-u Y_T} \big| X_T = x_{\scriptscriptstyle T} \big]
        f(x_{\scriptscriptstyle T}) p_{\scriptscriptstyle 0, T}(x, x_{\scriptscriptstyle T};0) dx_{\scriptscriptstyle T}
        = \pE_{x,y} \Big[ e^{-u Y_T} f(X_T) \Big]}\\
    & & = e^{ - \phi(0,u) - \tr[ \psi(0,u) x] - u y} \pE_{x,y} \Big[ Z_T f(X_T)\Big]
        = e^{ - \phi(0,u) - \tr[ \psi(0,u) x] - u y} \tilde{\pE}_{x,y}\Big[ f(X_T)\Big]\\
    & & = e^{ - \phi(0,u) - \tr[ \psi(0,u) x] - u y}
        \int_{S_d^{++}} f(x_{\scriptscriptstyle T}) p_{\scriptscriptstyle 0, T}(x, x_{\scriptscriptstyle T};u) dx_{\scriptscriptstyle T}.
\end{eqnarray*}
for all nonnegative measurable function $f$ on $S_d^{++}$. Hence the conditional Laplace transform satisfies
\begin{equation*}
    \pE_{x,y} \big[ e^{-u Y_T} \big| X_T = x_{\scriptscriptstyle T} \big]
        = e^{ - \phi(0,u) - \tr[ \psi(0,u) x] - u y}
        \frac{p_{\scriptscriptstyle 0, T}(x, x_{\scriptscriptstyle T};u)}{p_{\scriptscriptstyle 0, T}(x, x_{\scriptscriptstyle T};0)}.
\end{equation*}
By substituting (\ref{eq:qden}) into the above equation, we complete the proof.
\end{proof}

\subsection{Conditional Laplace Transform for Heston Model}\label{sec:LapHes}
If the volatility has a single factor, the WMSV model reduces to the classical Heston's stochastic volatility
model \cite{He93}. Therefore the analysis in the previous subsection can be readily applied to the Heston model. Moreover,
(\ref{eq:equations}) admits a closed-form solution, and it opens possibility of further analysis of the conditional Laplace transform
for the Heston model.

In the Heston model, the dynamics of log-price process $Y$ and variance
process $X$, under the risk neutral measure, are described by the
following system of stochastic differential equations:
\begin{equation}\label{eq:hest}
    \left\{
        \begin{array}{lll}
            dX_t = & \kappa(\theta - X_t)dt + \sigma\sqrt{X_t} dW_t, \\
            dY_t = & \big(r - \frac{1}{2}X_t\big) dt + \sqrt{X_t}\big[ \rho dW_t + \sqrt{1-\rho^2} dZ_t\big],
        \end{array}
    \right.
\end{equation}
with initial value $X_0 = x \ge 0$ and $Y_0 = y \in \reals$.
The second equation gives the dynamics of the log-price process: the stock price $S_t$ at time $t$ is given by $S_t = e^{Y_t}$, $r$ is the
risk-neutral drift and $\sqrt{X_t}$ is the volatility. The first equation describes the dynamics of the stochastic variance process. The parameter
$\kappa > 0$ determines the speed of the mean reversion, $\theta > 0$ represents the long-run mean of the variance process, and $\sigma > 0$ is the
volatility of the variance process. $W$ and $Z$ are independent standard $1$-dimensional Brownian motions.

The parameters of the WMSV model and those of the Heston model are related in the following way:
\begin{equation*}
    \kappa\theta = \delta\Sigma^{\top}\Sigma = \delta \Sigma^2, \hspace{0.5cm}
    \sigma = 2\Sigma, \hspace{0.5cm} \kappa = -2 H, \hspace{0.5cm} \rho = R.
\end{equation*}
And (\ref{eq:equations}) are reduced to
\begin{equation}\label{eq:systemheston}
    \left\{
        \begin{array}{rcl}
            \partial_t \psi(t,u) &=& \frac{1}{2} \sigma^2 \psi(t, u) + (\kappa + u \sigma \rho) \psi(t,u) + \frac{1}{2} u(u+1),\\
            \partial_t \phi(t,u) &=& -\kappa \theta \psi(t,u) - u r,\\
            \partial_t \Psi(t,u) &=&  \frac{1}{2} (\kappa + u \sigma \rho + \sigma^2 \psi(t,u)) \Psi(t,u),\\
            \partial_t V(t,u) &=& - \frac{1}{4} \sigma^2 \Psi(t,u)^2,
        \end{array}
    \right.
\end{equation}
with terminal values $\psi(T,u) = \phi(T,u) = V(T,u) = 0$ and $\Psi(T,u) = 1$.

Through a straightforward calculations, we can derive the conditional Laplace transform for the Heston model. The detailed calculation
is given in Appendix \ref{sec:app2}.
\begin{cor}
For the Heston model, the conditional Laplace transform of $Y_T$ given
$X_T = x_{\scriptscriptstyle T} > 0$ satisfies
\begin{eqnarray}
    \lefteqn{\pE_{x,y}\Big[e^{- u Y_T}\big| X_T = x_{\scriptscriptstyle T}\Big]}\nonumber\\
    &=& \frac{\eta(u)(1 - e^{-\kappa T})}{\kappa(1 - e^{-\eta(u) T})}
        \exp\Big\{ \textstyle - u \big(y  + (r - \frac{\kappa\theta\rho}{\sigma})T\big) - \frac{1}{2}(\eta(u) - \kappa)T\Big\}\nonumber\\
    & & \times \exp\Big\{\textstyle - \frac{1}{\sigma^2}
        \Big( \frac{\eta(u)(1 + e^{- \eta(u) T}) - (\kappa + u \sigma \rho)(1 - e^{- \eta(u) T})}{1 - e^{-\eta(u) T}}
        - \frac{2\kappa e^{ - \kappa T}}{1 - e^{- \kappa T}} \Big) x \Big\}\label{eq:condChfH}\\
    & & \times \exp\Big\{\textstyle - \frac{1}{\sigma^2}
        \Big( \frac{\eta(u)(1 + e^{- \eta(u) T}) + (\kappa + u \sigma \rho)(1 - e^{- \eta(u) T})}{1 - e^{-\eta(u) T}}
        - \frac{2\kappa}{1 - e^{- \kappa T}} \Big) x_{\scriptscriptstyle T} \Big\}\nonumber\\
    & & \times \frac{I_{0.5\delta - 1}\big[ \sqrt{x x_{\scriptscriptstyle T}}
        \frac{4\eta(u)e^{-0.5\eta(u)T}}{ \sigma^2(1-e^{-\eta(u)T}) }\big] }
        {I_{0.5\delta - 1}\big[ \sqrt{x x_{\scriptscriptstyle T}} \frac{4\kappa e^{-0.5\kappa T}}{ \sigma^2(1-e^{-\kappa T}) } \big] },\nonumber
\end{eqnarray}
where $I_{\nu}(\cdot)$ is the modified Bessel function of the first kind, and $\eta(u) = \sqrt{(\kappa + u \sigma \rho)^2 - \sigma^2 u(u+1)}$.
\end{cor}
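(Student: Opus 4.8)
The plan is to specialize Theorem \ref{thm:main} to the one-dimensional case $d=1$, in which all the matrix quantities $\psi,\Psi,V$ become scalar functions and the system (\ref{eq:equations}) reduces to (\ref{eq:systemheston}). The strategy is to solve (\ref{eq:systemheston}) in closed form, substitute the resulting scalar expressions into the general formula (\ref{eq:condChf}) evaluated at $t=0$, and finally rewrite the scalar hypergeometric function ${}_0F_1$ in terms of the modified Bessel function $I_\nu$ to arrive at (\ref{eq:condChfH}).

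First I would solve the Riccati equation for $\psi$. It is a scalar Riccati equation with constant coefficients and terminal condition $\psi(T,u)=0$, so the linearizing substitution $\psi = -\tfrac{2}{\sigma^2}\, w'/w$ turns it into a second-order linear ODE for $w$ whose characteristic equation has discriminant $(\kappa+u\sigma\rho)^2 - \sigma^2 u(u+1)$, whose square root is precisely $\eta(u)$. This yields an explicit $\psi(t,u)$ built from $e^{-\eta(u)(T-t)}$. With $\psi$ in hand the remaining equations are elementary: $\phi$ is obtained by direct integration of $-\kappa\theta\,\psi - ur$; the equation for $\Psi$ is linear, so $\Psi(t,u)$ is the exponential of an integral of $\tfrac12(\kappa+u\sigma\rho+\sigma^2\psi)$ normalized by $\Psi(T,u)=1$; and $V$ follows from $V(t,u)=\tfrac{\sigma^2}{4}\int_t^T \Psi(s,u)^2\,ds$ using $V(T,u)=0$.

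Next I would evaluate every quantity at $t=0$ and insert it into (\ref{eq:condChf}). The determinant prefactor $(\dete[V(0,0)]/\dete[V(0,u)])^{\delta/2}$ becomes an ordinary scalar ratio, the traces collapse to products, and the coefficients multiplying $x$ and $x_{\scriptscriptstyle T}$ reorganize into the two exponential terms of (\ref{eq:condChfH}). The last ingredient is the reduction of the hypergeometric function: in the scalar case one has ${}_0F_1(\tfrac12\delta; z) = \Gamma(\tfrac12\delta)\, z^{-(\delta/2-1)/2}\, I_{\delta/2-1}(2\sqrt{z})$, so both the numerator and denominator ${}_0F_1$ terms become modified Bessel functions of order $\delta/2-1$ and the common $\Gamma(\tfrac12\delta)$ factors cancel in the ratio. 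Matching the argument $2\sqrt{z}$ with $z = \tfrac14 V(0,u)^{-1}\Psi(0,u)^{2}\, x\, V(0,u)^{-1} x_{\scriptscriptstyle T}$ produces the stated Bessel argument $\sqrt{x x_{\scriptscriptstyle T}}\cdot 4\eta(u)e^{-\eta(u)T/2}/(\sigma^2(1-e^{-\eta(u)T}))$ in the numerator, while the denominator is the $u=0$ specialization in which $\eta(0)=\kappa$.

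I expect the main obstacle to be the bookkeeping in the final simplification rather than any conceptual difficulty. One must carefully collect the $z^{-(\delta/2-1)/2}$ factors arising from the Bessel reduction together with the determinant prefactor and the two exponential terms, and verify that the powers of $V(0,u)$ and $\Psi(0,u)$ introduced by the Bessel identity merge cleanly with the $(\dete[V])^{\delta/2}$ prefactor. Confirming that the $u=0$ case reproduces exactly the denominator of (\ref{eq:condChfH}) — equivalently, that setting $u=0$ sends $\eta(u)\to\kappa$ throughout and recovers the transition density of the original Heston variance process — is where the computation demands the most care.
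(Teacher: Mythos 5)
Your proposal is correct and follows essentially the same route as the paper's own derivation in Appendix \ref{sec:app2}: solve the scalar Riccati system (\ref{eq:systemheston}) in closed form (the paper simply cites the known solution rather than re-deriving it via the $\psi=-\tfrac{2}{\sigma^2}w'/w$ linearization, but that is the same solution), substitute $\psi,\phi,\Psi,V$ at $t=0$ into (\ref{eq:condChf}), and reduce the scalar ${}_0F_1(\tfrac12\delta;\cdot)$ to $I_{\delta/2-1}$ via ${}_0F_1(\nu+1;\tfrac14 x^2)=(x/2)^{-\nu}\Gamma(\nu+1)I_\nu(x)$, with the $u=0$ specialization ($\eta(0)=\kappa$) giving the denominator. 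The bookkeeping you flag as the main obstacle (collecting the $V$ and $\Psi$ powers from the Bessel identity with the determinant prefactor, and simplifying $2\psi(0,u)+\Psi(0,u)^2/V(0,u)$) is exactly the bulk of the paper's appendix computation.
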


\section{Review on Broadie and Kaya Method}\label{sec:BK}
Before going into our exact simulation method, we briefly review the Broadie and Kaya's exact simulation method \cite{BK06} of the Heston model.
The simulation method, they devised, motivated our research and it contains important idea of Fourier inversion techniques.

Since the system (\ref{eq:hest}) of the stochastic differential equations does not admit a closed-form solution and there is no
elementary way to simulate $X_T$ and $Y_T$ exactly. Broadie and Kaya proposed an exact sampling scheme which uses the Fourier
inversion techniques \cite{BK06}. From (\ref{eq:hest}),
\begin{eqnarray*}
    Y_T &=& y + r T - \frac{1}{2}\int_0^T X_s ds + \sqrt{1-\rho^2} \int_0^T \sqrt{X_s}dZ_s + \rho \int_0^T \sqrt{X_s} dW_s\\
    &=& y + r T - \frac{1}{2}\int_0^T X_s ds + \sqrt{1-\rho^2} \int_0^T \sqrt{X_s}dZ_s
        + \frac{\rho}{\sigma}\Big(X_T - x - \kappa \theta T + \kappa \int_0^T X_s ds \Big)\\
    &=& y + \frac{\rho}{\sigma}\big(X_T - x\big) + (r - \frac{\kappa\theta\rho}{\sigma})T
        + \big(\frac{\rho\kappa}{\sigma} - \frac{1}{2} \big)\int_0^T X_s ds + \sqrt{1-\rho^2}\int_0^T \sqrt{X_s} dZ_s.
\end{eqnarray*}
Recall that $Z$ is independent of $W$. Consequently, $Z$ is independent of the sigma field $\mathcal{F}^X_T = \sigma(X_t : 0 \le t \le T)$. So,
$\mathcal{F}^X_T$-conditional distribution of $\int_0^T \sqrt{X_s} dZ_s$ is normal with mean $0$. And its conditional variance can be computed
by the It\^{o}'s isometry:
\begin{equation*}
    \var_{x,y}\Big({\textstyle \int_0^T \sqrt{X_s} dZ_s}\Big|\mathcal{F}^X_T\Big)
        = \pE_{x,y}\Big[\Big({\textstyle \int_0^T \sqrt{X_s} dZ_s}\Big)^2\Big|\mathcal{F}^X_T\Big]
        = \pE_{x,y}\Big[{\textstyle \int_0^T X_s ds}\Big|\mathcal{F}^X_T\Big] = \int_0^T X_s ds. \\
\end{equation*}
These observations gives the following exact sampling scheme of the state variables $(X_T, Y_T)$:

\begin{alg}[Broadie and Kaya \cite{BK06}]\label{algo:BK-exact}
    This algorithm generates the state variables $X_T$ and $Y_T$ of the Heston model.
    \begin{enumerate}[Step (1)]
        {\setlength\itemindent{25pt}\item Generate a sample from the distribution of $X_T$ \label{step:BK-X}}
        {\setlength\itemindent{25pt}\item Generate a sample from the conditional distribution of $I = \int_0^T X_t dt$ given $X_T$ \label{step:BK-cond}}
        {\setlength\itemindent{25pt}\item Generate a standard normal random number $Z$ \label{step:BK-normal}}
        {\setlength\itemindent{25pt}\item Set $Y_T = y + \frac{\rho}{\sigma}\big(X_T - x\big) + (r - \frac{\kappa\theta\rho}{\sigma})T
            + \big(\frac{\rho\kappa}{\sigma} - \frac{1}{2} \big)I + \sqrt{(1-\rho^2)I} Z.$}
    \end{enumerate}
\end{alg}
For the step (\ref{step:BK-X}), one need to generate a sample from the distribution of $X_T$. Fortunately, the distribution of $X_T$ is well-known
in the literature (e.g. see Glasserman \cite{Gl04}). The law of $X_T$ can be expressed as
\begin{equation*}
    X_T = \frac{\sigma^2(1 - e^{-\kappa T})}{4\kappa} \chi'^2_{\delta}\Big( \frac{4\kappa e^{-\kappa T}}{\sigma^2(1-e^{-\kappa T})}x\Big),
\end{equation*}
where $\delta = \frac{4\kappa \theta}{\sigma^2}$, $\chi'^2_{\delta}(\lambda)$ denotes the noncentral chi-squared random variable with $\delta$ degrees of freedom, and a noncentrality parameter $\lambda$.
For the sampling of noncentral chi-squared random variable, refer to \cite{Gl04}.

The Broadie and Kaya's idea of Fourier inversion techniques comes into play at the step (\ref{step:BK-cond}). They showed that the conditional
characteristic function $\varphi(\cdot|x,x_{\scriptscriptstyle T})$ of $\int_0^T X_t dt$ given $X_T = x_{\scriptscriptstyle T}$ can be written as
\begin{eqnarray*}
    \lefteqn{\varphi(\lambda|x,x_{\scriptscriptstyle T})
        = \pE_{x,y}\Big[ \exp\big({\textstyle i \lambda \int_0^T X_t dt}\big)\Big| X_T = x_{\scriptscriptstyle T}\Big]}\\
    &=& \frac{\gamma(\lambda) e^{ -(1/2)(\gamma(\lambda) - \kappa)T}(1 - e^{-\kappa T})}{\kappa(1-e^{-\gamma(\lambda) T})}\\
    & & \times \exp\Big\{\frac{x+x_{\scriptscriptstyle T}}{\sigma^2} \Big[\frac{\kappa(1+e^{-\kappa T})}{1 - e^{-\kappa T}}
        - \frac{\gamma(\lambda)(1 + e^{-\gamma(\lambda) T})}{1 - e^{-\gamma(\lambda)T}}\Big]\Big\}
        \frac{I_{0.5\delta - 1}\big[ \sqrt{x x_{\scriptscriptstyle T}}
        \frac{4\gamma(\lambda)e^{-0.5\gamma(\lambda)T}}{ \sigma^2(1-e^{-\gamma(\lambda)T}) }\big] }
        {I_{0.5\delta - 1}\big[ \sqrt{x x_{\scriptscriptstyle T}} \frac{4\kappa e^{-0.5\kappa T}}{ \sigma^2(1-e^{-\kappa T}) } \big] },
\end{eqnarray*}
where $\gamma(\lambda) = \sqrt{\kappa^2 - 2\sigma^2 i \lambda}$. Then
they numerically inverted the conditional characteristic function $\varphi(\cdot |x,x_{\scriptscriptstyle T})$ to have an approximation of the
conditional distribution,
\begin{equation}\label{eq:BKdist}
    \pP_x\Big({\textstyle \int_0^T X_t dt \le v}|X_T = x_{\scriptscriptstyle T}\Big) \approx F_{h,N}(v |x,x_{\scriptscriptstyle T})
        = \frac{h v}{\pi} + \frac{2}{\pi} \sum_{n = 1}^N \frac{\sin(h n v)}{n} \re \big[ \varphi(h n|x,x_{\scriptscriptstyle T})\big],
\end{equation}
where $h > 0$ is the discretization step size.
Finally, they applied the inverse transform method to $F_{h,N}( \cdot |x,x_{\scriptscriptstyle T})$ to simulate the integral $\int_0^T X_t dt$,
i.e., they generated a uniform random number $U$ and solved the following equation for $\int_0^T X_t dt$ numerically
\begin{equation*}
    F_{h,N}\Big( {\textstyle \int_0^T X_t dt} |x,x_{\scriptscriptstyle T}\Big) = U.
\end{equation*}

\section{Exact Simulation Method}\label{sec:exact}
In this section, we present in detail our exact sampling algorithm of the WMSV model. Since the process $(X, Y)$ is a time-homogeneous Markov process,
the exact simulation method of $(X_T, Y_T)$ given an initial value $(X_0, Y_0) = (x,y)$ can be extended to the exact simulation method
of $(X_{t_2}, Y_{t_2})$ given $(X_{t_1}, Y_{t_2})$ for arbitrary $t_1 < t_2$ in a straightforward manner. Therefore we only consider the simulation
of $(X_T, Y_T)$ given an initial value $(X_0, Y_0) = (x,y)$.

For the WMSV model, a naive extension of Broadie and Kaya method is hardly accomplishable. The difficulty comes from
the dimensionality of the volatility factor process $X$. In the Broadie and Kaya method, one need to generate a sample from the conditional (univariate)
distribution of $\int_0^T X_t dt$ given $X_T = x_{\scriptscriptstyle T}$, and it can be achieved by Fourier inversion techniques. In contrast,
the integrated volatility factor $\int_0^T X_t dt$ of the WMSV model has a $d \times d$ matrix-variate distribution, which makes it almost
impossible to generate a sample from the distribution by Fourier inversion techniques. Hence, instead of following
the Broadie and Kaya's approach, we take a rather direct way to
achieve the goal through Theorem \ref{thm:main}. Roughly, our method consists of the following two step.

\begin{alg}\label{algo:exact}
    This algorithm generates $L$ pairs of the state variables $(X_T, Y_T)$ of the WMSV model.
    \begin{enumerate}[Step (1)]
        {\setlength\itemindent{25pt}\item Generate $L$ samples $X^{(1)}_T, \cdots, X^{(L)}_T$ from the distribution of $X_T$}
        {\setlength\itemindent{25pt}\item For each $l = 1, \cdots, L$, generate a sample $Y^{(l)}_T$
            from the conditional distribution of $Y_T$ given $X_T = X^{(l)}_T$}
    \end{enumerate}
\end{alg}
\noindent In the following subsections, we go through the details of these two steps.

\subsection{Sampling from the Distribution of $X_T$}
As indicated in the proof of Theorem \ref{thm:main}, $X_T$ has noncentral Wishart distribution with degrees of freedom $\delta$,
covariance matrix $V(0,0)$, and matrix of noncentrality parameter $V(0,0)^{-1}$ $\Psi(0,0)^{\top}$ $x$ $\Psi(0,0)$.
The Wishart distributions with integer degrees of freedom
(i.e., $\delta \in \mathbb{N}$) are extensively studied in the literature of the multivariate statistical analysis (e.g. see \cite{Gl76,Ks59,Mu82}).

In case that $\delta$ is an integer which is greater than or equal to $d$, one way of exact simulation is
squaring a normal random matrix \cite{Br91}. Let $N_t$ be a
solution of the following equation
\begin{equation}\label{eq:linearSDE}
    dN_t = N_t H^{\top} dt + dB_t \Sigma , \hspace{0.5cm} \text{ with } \hspace{0.5cm} N_0^{\top}N_0 = x,
\end{equation}
where $B$ is a standard $\delta \times d$ matrix Brownian motion. One can easily check that $N^{\top} N$ satisfies the stochastic differential
equation (\ref{eq:wishart}). The equation (\ref{eq:linearSDE}) admits a closed form solution
\begin{equation*}
    N_t = \left( N_0 + \int_0^t dB_s \Sigma e^{- s H} \right) e^{t H}, \hspace{0.5cm} \text{ for } \hspace{0.5cm} 0 \le t < \infty.
\end{equation*}
Notice that the rows of $N_T$ are independent normal random vectors with common covariance matrix $V(0,0)$ and the mean of $N_T$ is $N_0 e^{T H}$.
Therefore, for integer degrees $\delta$ of freedom, the exact sampling of $X_T$ can be achieved by sampling $\delta \times d$ i.i.d. normal random variables. \
And there is also a more sophisticated but efficient way of simulating noncentral Wishart distributions with an integer degrees of freedom \cite{Gl76}.

As far as we know, an exact sampling scheme for a noncentral Wishart distribution with non-integer valued degrees of freedom has only recently been
devised by Ahdida and Alfonsi \cite{AA10}. They have used a splitting method of the infinitesimal generator of Wishart processes. Their exact
sampling scheme requires sampling of at most $d(d-1)/2$ i.i.d. normal random variables and $d$ noncentral chi-square random variables.
In the numerical experiments of this paper, we used the exact sampling method of Ahdida and Alfonsi \cite{AA10}.

\subsection{Sampling from the Conditional Distribution of $Y_T$ Given $X_T$}\label{sec:condSamp}
This subsection is devoted to the step of sampling from the conditional distribution of $Y_T$ given $X_T = x_{\scriptscriptstyle T}$. This
step is the most technical and time consuming step in our method. As explained in Section \ref{sec:BK}, Broadie and Kaya adopted Fourier inversion
techniques to invert the conditional characteristic function of $\int_0^T X_t dt$ given $X_T = x_{\scriptscriptstyle T}$. We follow a similar
approach, but we invert directly the conditional characteristic function of $Y_T$ given $X_T = x_{\scriptscriptstyle T}$ to avoid the difficulty
in converting the characteristic function of matrix-variate random variable.

Let $\varphi(\cdot; x, y, x_{\scriptscriptstyle T})$ be the conditional characteristic function, i.e.,
\begin{equation*}
    \varphi(\lambda; x, y, x_{\scriptscriptstyle T}) = \pE_{x,y}\Big[ e^{i\lambda Y_T} \Big| X_T = x_{\scriptscriptstyle T}\Big],
    \hspace{0.5cm} \text{ for } \hspace{0.5cm} \lambda \in \reals.
\end{equation*}
And let $F(\cdot; x, y, x_{\scriptscriptstyle T})$ be the corresponding distribution function:
\begin{equation*}
    F(v; x, y, x_{\scriptscriptstyle T}) = \pP_{x,y}\Big( Y_T \le v \big| X_T = x_{\scriptscriptstyle T}\Big).
\end{equation*}
By Levy's inversion formula, the distribution can be recovered from the characteristic function, i.e., for $-\infty < l_{\epsilon} < v$, we have
\begin{eqnarray*}
    F(v; x, y, x_{\scriptscriptstyle T}) &=& F(l_{\epsilon}; x, y, x_{\scriptscriptstyle T})
        + \frac{1}{2\pi}\int_{-\infty}^{\infty} \varphi(\lambda; x, y, x_{\scriptscriptstyle T})
        \frac{e^{-i\lambda l_{\epsilon}} - e^{-i\lambda v}}{i\lambda} d\lambda\\
    &=& F(l_{\epsilon}; x, y, x_{\scriptscriptstyle T}) + \frac{1}{\pi} \int_0^{\infty}
        \re\Big[ \varphi(\lambda; x, y, x_{\scriptscriptstyle T})\frac{e^{-i\lambda l_{\epsilon}} - e^{-i\lambda v}}{i\lambda} \Big] d\lambda\\
    &=& F(l_{\epsilon}; x, y, x_{\scriptscriptstyle T}) + \frac{1}{\pi} \int_0^{\infty}
        \im\Big[ \varphi(\lambda; x, y, x_{\scriptscriptstyle T})(e^{-i\lambda l_{\epsilon}} - e^{-i\lambda v}) \Big] \frac{d\lambda}{\lambda}.
\end{eqnarray*}

Notice that we can make $F(l_{\epsilon};x,y,x_{\scriptscriptstyle T})$ small enough to ignore by taking a small $l_{\epsilon}$. So the integral
terms are dominant in the above expressions. The integral above can be approximated by a numerical integration method. We use the trapezoidal rule to compute the distribution function
numerically. It is known that the trapezoidal rule works well for oscillating integrands because the errors tend to be cancelled (see section 4 of
Abate and Whitt \cite{AW92}). An integral on the whole positive real line can be approximated in the following way. For notational simplicity, we write the
integrand as $g(\lambda)$.
\begin{equation*}
    \int_0^{\infty} g(\lambda) d \lambda = \frac{g(0+)}{2} h   + h \sum_{n = 1}^{\infty} g(n h) - e_d(h)
        = \frac{ g(0+)}{2} h + h \sum_{n = 1}^{N} g(n h) - e_d(h) - e_t(N),
\end{equation*}
where $e_d(h)$ and $e_t(N)$ are errors due to the discretization of the continuous variable and the truncation of the infinite sum, respectively. Each of these errors can be made arbitrarily small by
taking sufficiently small $h$ and sufficiently large $N$, and we give a detailed discussion on the control of these errors in Section \ref{sec:issue}.
One can easily find that $g(0+) = v - l_{\epsilon}$. Therefore, we approximate the distribution
function by the following finite trigonometric series:
\begin{eqnarray}
    F_{h, N}(v; x, y, x_{\scriptscriptstyle T}) &=& \frac{h(v - l_{\epsilon})}{2\pi}
    + \frac{1}{\pi} \sum_{n=1}^{N}
        \im\Big[ \frac{\varphi(n h; x, y, x_{\scriptscriptstyle T})}{n}(e^{-i n h l_{\epsilon}} - e^{-i n h v}) \Big]\nonumber\\
    &=& \frac{h(v - l_{\epsilon})}{2\pi} + \frac{1}{\pi} \sum_{n=1}^{N}
        \Big( \frac{\re\big[\varphi(n h; x, y, x_{\scriptscriptstyle T})]}{n } \big(\sin(n h v) - \sin(n h l_{\epsilon})\big)\Big)\label{eq:dist}\\
    & & - \frac{1}{\pi} \sum_{n=1}^{N}
        \Big(\frac{\im\big[\varphi(n h; x, y, x_{\scriptscriptstyle T})]}{n }\big(\cos(n h v) - \cos(n h l_{\epsilon})\big)\Big).\nonumber
\end{eqnarray}

We address how to evaluate the conditional characteristic function $\varphi(\lambda;x,y,x_{\scriptscriptstyle T})$. The conditional
characteristic function can be obtained by taking $u = - i \lambda$ in the formula (\ref{eq:condChf}).
The formula (\ref{eq:condChf}) involves the solutions $\psi$, $\phi$, $\Psi$, and $V$ of the system (\ref{eq:equations}) of ordinary differential equations.
As we have shown in Section \ref{sec:LapHes}, the system (\ref{eq:equations}) admits a closed-form solution for the Heston model.
In general, the system (\ref{eq:equations}) does not admit a closed-form solution because of non-commutativity of matrix multiplications.
But the system (\ref{eq:equations}) can be efficiently solved by numerical methods. In particular, we used the MATLAB function
{\sf ode45} for solving equations in our experiments.\footnote{The MATLAB function {\sf ode45} is based on an explicit Runge-Kutta $(4,5)$ formula of Dormand and
 Prince\cite{DP80}.} It should be noted that the functions $\psi$, $\phi$, $\Psi$, and $V$ are needed to be evaluated only at
$(0, 0), (0, -i h), \cdots, (0, - i N h)$, and such evaluation points can be chosen uniformly across samples $X^{(1)}_T, \cdots, X^{(L)}_T$.
Therefore, it is enough to solve the system (\ref{eq:equations}) at those grid points only once
for all simulation runs, and this numerical step does not cause a computational burden if the number $L$ of simulation runs is relatively larger than $N$.

The expression (\ref{eq:condChf}) requires calculations of the power of complex numbers, which is, in general, a multi-valued function:
\begin{equation*}
    (z)^{\nu} = (|z|e^{i\Argu(z)})^{\nu} = (|z|e^{i(\Argu(z)+2 m \pi)})^{\nu} = |z|^{\nu} e^{i\nu\Argu(z)+2 m\nu \pi}, \hspace{0.3cm} m \in \mathbb{Z},
\end{equation*}
with principal argument $- \pi < \Argu(z) \le \pi$. If $\lambda \mapsto z(\lambda)$ is a complex-valued continuous function on $\reals$ which
does not attains $0$, then there exists a unique continuous function $\lambda \mapsto \argu(z(\lambda))$ such that $- \pi < \argu(z(0)) \le \pi$.
Such a continuation can be easily constructed by tracing the principal argument $\Argu(z(\lambda))$ and adding or subtracting $2\pi$ at the discontinuous
points. And then, we can use it to construct a continuous version of the power function:
\begin{equation*}
    (z(\lambda))^{\nu} = |z(\lambda)|^{\nu} e^{i\nu\argu(z(\lambda))}.
\end{equation*}
It is obvious that $\lambda \mapsto \dete V(0, -i\lambda)$ is a continuous function which does not attains $0$, and we can apply the above observation
to calculate $(\dete V(0, -i\lambda))^{\delta/2}$.

The conditional characteristic function (\ref{eq:condChf}) involves a hypergeometric function of matrix argument. The hypergeometric function
of matrix argument is defined by the following power series
\begin{equation}\label{eq:hyper}
    {}_0 F_1(b; y) = \sum_{k=0}^{\infty} \sum_{|\iota| = k} \frac{1}{(b)_{\iota}} \frac{C_{\iota}(y)}{k!}.
\end{equation}
For the definitions of $|\iota|$ and $(b)_{\iota}$, see Section \ref{sec:def}. The zonal polynomials $C_{\iota}(y)$ are not
polynomials of the matrix $y$, but polynomials of its eigenvalues. So, we sometimes write the hypergeometric functions and
the zonal polynomials in the following way:
\begin{equation*}
    C_{\iota}(y) = C_{\iota}(\alpha_1, \cdots, \alpha_d),\hspace{0.3cm}
    {}_0 F_1(b;y) = {}_0 F_1 (b;\alpha_1, \cdots, \alpha_d), \hspace{0.3cm}
\end{equation*}
where $\alpha_1, \cdots, \alpha_d$ are eigenvalues of $y$. Koev and Edelman \cite{KE06} exploited the combinatorial properties of zonal
polynomials and the generalized hypergeometric coefficients to develop an algorithm for computing a truncated version of (\ref{eq:hyper}):
\begin{equation*}
    {}_0^m F_1^{ }(b; \alpha_1, \cdots, \alpha_d)
        = \sum_{k=0}^{m} \sum_{|\iota| = k} \frac{1}{(b)_{\iota}} \frac{C_{\iota}(\alpha_1, \cdots, \alpha_d)}{k!}.
\end{equation*}
Since the denominator of the series (\ref{eq:hyper}) grows faster than factorial order, the series converges quickly and the truncated version gives
a good approximation. We provide a detailed error analysis of the truncated series in Section \ref{sec:issue}.
MATLAB implementations of Koev and Edelman's algorithms are available in the author's homepage \cite{KoWeb}. But the routine gets only real eigenvalues
as input arguments. So we modified the codes applicable for complex input arguments.

To sample the log-price $Y_T$, we use the inverse transform method. We generate a uniform random number $U$
and we numerically solve the equation for $Y_T$
\begin{equation}\label{eq:Newton}
    F_{h,N}(Y_T; x, y, x_{\scriptscriptstyle T}) = U.
\end{equation}
The equation can be solved efficiently by numerical method, e.g., Newton's method, because the function $F_{h,N}$ is a strictly increasing function.
We provide details in the following subsection.

\subsection{Some Implementation Issues}\label{sec:issue}
In order to implement our method, we need to resolve some issues. We have to determine appropriate values of $l_{\epsilon}$, $h$, and $N$ in (\ref{eq:dist}),
and decide the number of summands to approximate the infinite series (\ref{eq:hyper}). We also need to address how to solve the equation (\ref{eq:Newton}).

A careful choice of values of $\l_{\epsilon}$, $h$, and $N$ is crucial in our method, because they determine
the points where the characteristic function $\varphi(\cdot;x,y,x_{\scriptscriptstyle T})$ is evaluated.
Too few evaluation points might introduce a large bias of the Monte Carlo simulation due to the approximation error of the conditional distribution function.
Too many points might make our method too slow, because the evaluation of the characteristic function
$\varphi(\cdot;x,y,x_{\scriptscriptstyle T})$ is the most time-consuming step of our method.

As indicated in the previous subsection, the approximation (\ref{eq:dist}) involves three different kinds of errors: the discretization error $e_d(h)$,
the truncation error $e_t(N)$, and the left tail probability $F(l_{\epsilon};x,y,x_{\scriptscriptstyle T})$. We suggest a way to control these errors based on
the conditional mean and standard deviation of $Y_T$ given $X_T = x_{\scriptscriptstyle T}$. Recall that the mean and standard deviation can be
easily found by differentiating the characteristic function $\varphi(\cdot;x,y,x_{\scriptscriptstyle T})$:
\begin{eqnarray*}
    \mu(x_{\scriptscriptstyle T}) &=& \pE_{x,y}[Y_T | X_T = x_{\scriptscriptstyle T}] = \im\big(\varphi'(0;x,y,x_{\scriptscriptstyle T})\big),\\
    \sigma(x_{\scriptscriptstyle T})^2 &=& \pE_{x,y}[(Y_T - \mu)^2 | X_T = x_{\scriptscriptstyle T}]
        = -\re\big(\varphi''(0;x,y,x_{\scriptscriptstyle T})\big) - \big(\im(\varphi'(0;x,y,x_{\scriptscriptstyle T}))\big)^2.
\end{eqnarray*}
In the section 5 of Abate and Whitt \cite{AW92}, they used Poisson summation formula to prove that the discretization error is bounded by two
simple tail probabilities
\begin{equation*}
    0 \le e_d(h) \le \big(1 - F(2 \pi /h + l_{\epsilon};x,y,x_{\scriptscriptstyle T})\big) + F(-2 \pi/ h + v;x,y,x_{\scriptscriptstyle T})
\end{equation*}
for $v - l_{\epsilon} < 2\pi /h$. Suppose that we want to calculate $F(v;x,y,x_{\scriptscriptstyle T})$ with an error less than $\epsilon$. Let $l_{\epsilon}$ and
$u_{\epsilon}$ such that $F(l_{\epsilon};x,y,x_{\scriptscriptstyle T}) \le \epsilon/4$ and
$F(u_{\epsilon};x,y,x_{\scriptscriptstyle T}) \ge 1 - \epsilon/4$. Take $h = \frac{2\pi}{u_{\epsilon} - l_{\epsilon}}$. Then,
for $l_{\epsilon} \le v \le u_{\epsilon}$,
\begin{eqnarray*}
    \frac{2\pi}{h} + l_{\epsilon} = u_\epsilon, \hspace{0.3cm} \text{ and } \hspace{0.3cm}
        -\frac{2\pi}{h} + v \le -\frac{2\pi}{h} + u_{\epsilon} =  l_{\epsilon},
\end{eqnarray*}
so that
\begin{eqnarray*}
    |F(l_{\epsilon}; x,y,x_{\scriptscriptstyle T}) - e_d(h)| \le \max\{F(l_{\epsilon}; x,y,x_{\scriptscriptstyle T}), e_d(h)\} \le \epsilon/2.
\end{eqnarray*}
Then we turn to the truncation error $e_t(N)$. Since the summands in (\ref{eq:dist}) are oscillating, the absolute value of the last term gives an
estimate for the truncation error:
\begin{equation*}
    |e_t(N)| \approx \frac{1}{\pi} \Big|\im\Big[ \frac{\varphi(N h; x, y, x_{\scriptscriptstyle T})}{N}(e^{-i N h l_{\epsilon}} - e^{-i N h v})\Big]\Big|
    \le \frac{2|\varphi(N h; x, y, x_{\scriptscriptstyle T})|}{\pi}
\end{equation*}
So, we may choose a large $N$ so that $|\varphi(N h; x, y, x_{\scriptscriptstyle T})| \le \epsilon\pi/4$ to make the truncation error approximately less than
$\epsilon/2$.

These error bounds and estimates
are theoretically appealing, but they are not suitable for practical use. For example, we can compute
$\varphi(N h; x, y, x_{\scriptscriptstyle T})$, but it is very time-consuming. Furthermore, we need to evaluate
$F(\cdot;x,y,x_{\scriptscriptstyle T})$, which is not known in advance. To overcome these difficulties, we exploit a heuristic idea, which is
obtained from numerical experiments. The conditional distribution of $Y_T$ given $X_T = x_{\scriptscriptstyle T}$ is roughly similar to the normal
distribution, because the terminal value $X_T$ of volatility factor is already known and the main source of randomness is the stochastic integral
with respect to a Brownian motion. But they show some differences as well. The normal distribution is symmetric with respect to mean, but the conditional
distribution of $Y_T$ is asymmetric, and the decay rate of the conditional characteristic function of $Y_T$ gets slower as the correlation parameter
$R$ increases. So, we start with $l_{\epsilon}$, $h$, and $N$ which are suggested
by the normal distribution, and then modify them to take such effects into account. The normal distribution suggests us to take $l_{\epsilon}$, $h$,
and $N$ of the following form:
\begin{equation*}
    \left\{
        \begin{array}{ll}
            l_{\epsilon} &= \Phi^{-1}\Big(\epsilon/4; \mu(x_{\scriptscriptstyle T}), \sigma(x_{\scriptscriptstyle T})^2 \Big)
                = \mu(x_{\scriptscriptstyle T}) + \sigma(x_{\scriptscriptstyle T}) \Phi^{-1}(\epsilon/4),\\
            h &= 2 \pi \Big(\Phi^{-1}\big(1-\epsilon/4; \mu(x_{\scriptscriptstyle T}), \sigma(x_{\scriptscriptstyle T})^2\big)
                - \Phi^{-1}\big(\epsilon/4; \mu(x_{\scriptscriptstyle T}), \sigma(x_{\scriptscriptstyle T})^2\big)\Big)^{-1}\\
            &= - \pi \Big(\sigma(x_{\scriptscriptstyle T}) \Phi^{-1}\big(\epsilon/4\big)\Big)^{-1},\\
%            h &= \frac{2 \pi} {\Phi^{-1}\big(1-\epsilon/4; \mu(x_{\scriptscriptstyle T}), \sigma(x_{\scriptscriptstyle T})^2\big)
%                - \Phi^{-1}\big(\epsilon/4; \mu(x_{\scriptscriptstyle T}), \sigma(x_{\scriptscriptstyle T})^2\big)}
%                = \frac{-\pi} {\sigma(x_{\scriptscriptstyle T}) \Phi^{-1}\big(\epsilon/4\big)},\\
            N &= \left\lceil \frac{1}{h\sigma(x_{\scriptscriptstyle T})} \sqrt{- 2 \log(\pi\epsilon/4)} \right\rceil,
        \end{array}
    \right.
\end{equation*}
where $\Phi^{-1}(\cdot; \mu, \sigma^2)$ and $\Phi^{-1}(\cdot)$ are the inverse functions of the normal distribution $\mathcal{N}(\mu, \sigma^2)$
and the standard normal distribution, respectively. And $\lceil a \rceil$ denotes the smallest integer greater than $a$. Then we modify them with the
mean $\mu(x_{\scriptscriptstyle T})$ and the correlation parameter $R$. Since we want to choose the points $u = - i n h$, $n = 1, \cdots, N$ uniformly
across samples $X^{(1)}_T, \cdots, X^{(L)}_T$, we take $l_{\epsilon}$, $h$, and $N$ in a conservative way:
\begin{eqnarray}\label{eq:choice}
    \left\{
        \begin{array}{ll}
            l_{\epsilon} &= \min\limits_{l = 1, \cdots, L}
                \left\{ \mu({\scriptstyle X^{(l)}_T}) + \sigma({\scriptstyle X^{(l)}_T}) \Phi^{-1}(\epsilon/4) \right\},\\
            h &= -\pi \times \min\limits_{l = 1, \cdots, L}
                \Big(\big(c_1 |\mu({\scriptstyle X^{(l)}_T})| +  \sigma({\scriptstyle X^{(l)}_T}) \big)\Phi^{-1}\big(\epsilon/4\big)\Big)^{-1} ,\\
%            h &= \min\limits_{l = 1, \cdots, L}
%                \frac{-\pi} {\big(c_1 |\mu({\scriptstyle X^{(l)}_T})| +  \sigma({\s
            N &= \max\limits_{l = 1, \cdots, L} \left\lceil \frac{1 + c_2 \sqrt{\tr[R^{\top}R]}}
                {h\sigma({\scriptstyle X^{(l)}_T})} \sqrt{- 2 \log(\pi\epsilon/4)} \right\rceil,
        \end{array}
    \right.
\end{eqnarray}
for appropriate constants $c_1 \ge 0$ and $c_2 \ge 0$. In particular, we take $c_1 = 0.1$ and $c_2 = 0.5$ in our experiments.

Now, we consider the problem of determining where to truncate the hypergeometric series (\ref{eq:hyper}).
The partitions of integers can be ordered lexicographically, i.e., if $\iota = (k_1, \cdots, k_d)$ and $\tilde{\iota} = (l_1, \cdots, l_d)$
are two partitions of integers we will write $\iota > \tilde{\iota}$ if $k_i > l_i$ for
the first index $i$ at which the parts become unequal. With this order relation, we give an upper bound for the truncation error of the hypergeometric
functions. The proof is given in Appendix \ref{sec:app1}.

\begin{prop}\label{prop:hyperub}
For $b > \frac{1}{2}(d-1)$ and an integer $m \ge d-1$, we have for $\alpha_1, \cdots, \alpha_d \in \complex$
\begin{equation}\label{eq:hyperub}
    |{}_0 F_1(b;\alpha_1, \cdots, \alpha_d) - {}_0^m F_1^{}(b;\alpha_1, \cdots, \alpha_d)|
        \le \frac{(|\alpha_1|+\cdots+|\alpha_d|)^{m+1}}{(m+1)!(b)_{\hat{\iota}(m+1)}} e^{|\alpha_1|+\cdots+|\alpha_d|},
\end{equation}
where $\hat{\iota}(m+1)$ is the smallest partition among the partitions of $m+1$ into not more than $d$ parts.
\end{prop}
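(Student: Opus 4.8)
The plan is to estimate the tail of the series \eqref{eq:hyper} directly. Writing $S = |\alpha_1| + \cdots + |\alpha_d|$ and recalling the definition of the truncation ${}_0^m F_1^{}$, the left-hand side of \eqref{eq:hyperub} is
\[
    \Big| \sum_{k=m+1}^{\infty} \frac{1}{k!} \sum_{|\iota|=k} \frac{C_\iota(\alpha_1,\ldots,\alpha_d)}{(b)_\iota} \Big|.
\]
First I would pass to absolute values inside both sums. The zonal polynomial $C_\iota$ expands with nonnegative coefficients in the monomial symmetric functions of its (eigenvalue) arguments, so the triangle inequality gives $|C_\iota(\alpha_1,\ldots,\alpha_d)| \le C_\iota(|\alpha_1|,\ldots,|\alpha_d|)$. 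Since $b > \tfrac{1}{2}(d-1)$, every number $b - \tfrac{i-1}{2}$ with $1 \le i \le d$ is strictly positive, so each generalized Pochhammer symbol $(b)_\iota = \prod_{i=1}^d \big(b - \tfrac{i-1}{2}\big)_{k_i}$ (for $\iota = (k_1,\ldots,k_d)$) is a positive real number and the term-by-term bound is legitimate.

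The crux of the argument is the uniform lower bound $(b)_\iota \ge (b)_{\hat\iota(m+1)}$ for every partition $\iota$ into at most $d$ parts with $|\iota| \ge m+1$, which I would prove in two steps. For a fixed degree $k$, I claim the lexicographically smallest — equivalently most balanced — partition $\hat\iota(k)$ minimizes $(b)_\iota$: moving one box from a larger part (row $i$) to a smaller part (row $j>i$, with $k_i > k_j$) multiplies $(b)_\iota$ by
\[
    \frac{b - \tfrac{j-1}{2} + k_j}{b - \tfrac{i-1}{2} + k_i - 1},
\]
and because $i<j$ forces $b - \tfrac{i-1}{2} > b - \tfrac{j-1}{2}$ while $k_j \le k_i - 1$, this factor is strictly less than $1$. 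Hence each balancing (Robin Hood) transfer decreases the symbol, and since every partition of $k$ dominates $\hat\iota(k)$ in the majorization order and is connected to it by such transfers, $\hat\iota(k)$ is the minimizer. The second step uses the hypothesis $m \ge d-1$, i.e. $m+1 \ge d$: then $\hat\iota(k)$ has all $d$ parts at least $1$ for every $k \ge m+1$, and passing from $\hat\iota(k)$ to $\hat\iota(k+1)$ adds a single box to a part in some row $i$ of current size $k_i \ge 1$, multiplying the symbol by $b - \tfrac{i-1}{2} + k_i \ge b - \tfrac{i-1}{2} + 1 > 1$. Thus $(b)_{\hat\iota(k)}$ is nondecreasing in $k$ for $k \ge m+1$, which together with the fixed-degree minimality yields the uniform bound.

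With this bound in hand I would factor $1/(b)_{\hat\iota(m+1)}$ out of the tail and invoke the normalization $\sum_{|\iota|=k} C_\iota(y_1,\ldots,y_d) = (y_1 + \cdots + y_d)^k$ at $y_i = |\alpha_i|$ to collapse each inner sum to $S^k$, giving
\[
    \text{(LHS of \eqref{eq:hyperub})} \le \frac{1}{(b)_{\hat\iota(m+1)}} \sum_{k=m+1}^\infty \frac{S^k}{k!}.
\]
The remaining exponential tail is bounded by $\sum_{k=m+1}^\infty S^k/k! \le \tfrac{S^{m+1}}{(m+1)!}\, e^S$, which follows from $\tfrac{(m+1)!}{(m+1+j)!} = \prod_{i=1}^j (m+1+i)^{-1} \le \tfrac{1}{j!}$ summed against $\sum_j S^j/j! = e^S$, producing exactly \eqref{eq:hyperub}. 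I expect the main obstacle to be the Pochhammer comparison of the middle paragraph: making the majorization/box-moving argument fully rigorous — in particular verifying that each balancing transfer stays within the set of weakly decreasing partitions and that a composite of such transfers connects an arbitrary $\iota$ to $\hat\iota(k)$ — is the only nonroutine ingredient, while the two analytic estimates and the zonal-polynomial facts (nonnegative monomial coefficients and the trace normalization) are standard.
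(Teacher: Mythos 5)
Your proposal is correct and follows essentially the same route as the paper: bound $|C_\iota|$ by $C_\iota(|\alpha_1|,\ldots,|\alpha_d|)$, show the balanced partition minimizes $(b)_\iota$ at fixed degree via box-moving transfers (the paper's Lemma \ref{lemma:1}), use $m+1\ge d$ to get monotonicity of $(b)_{\hat\iota(k)}$ in $k$ (Lemma \ref{lemma:2}), and finish with the trace normalization of zonal polynomials and the elementary exponential tail estimate. The one point you flag as delicate --- choosing the transfer so the result remains a weakly decreasing partition --- is exactly what the paper handles with its $\alpha(\iota)$, $\beta(\iota)$ bookkeeping, so your plan matches the published argument in all essentials.
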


Notice that the smallest partition $\hat{\iota}(k) = (k_1, \cdots, k_d)$ among the partitions of $k$ into not more than $d$ parts is the most evenly
distributed partition, and the smallest partition $\hat{\iota}(k+1)$ of $k+1$ is the partition which is obtained by adding $1$ to $k_1$ if
all the $k_i$'s are equal, or adding $1$ to $k_j$ at which $k_j$ differs from $k_{j-1}$ otherwise. For example, $(2,2,2,2,1)$, $(2,2,2,2,2)$,
and $(3,2,2,2,2)$ are the smallest partitions of $9$, $10$, and $11$ into $5$ parts, respectively. With this observation, we can compute the hypergeometric
coefficients $(b)_{\hat{\iota}(k)}$ as follows: $(b)_{(1)} = b$ and
\begin{equation}\label{eq:hyperco}
    (b)_{\hat{\iota}(k+1)} =
    \left\{
        \begin{array}{ll}
            (b)_{\hat{\iota}(k)} (b + k_1) &\hspace{0.5cm} \text{ if } k_1 = \cdots = k_d\\
            (b)_{\hat{\iota}(k)} \big(b + k_j - \frac{1}{2}(j-1) \big)
                &\hspace{0.5cm} \text{ if } k_1 = \cdots = k_{j-1} \neq k_{j} = \cdots = k_d
        \end{array}
    \right.,
\end{equation}
where $\hat{\iota}(k) = (k_1, \cdots, k_d)$.

We use the bound (\ref{eq:hyperub}) as a criterion for truncating the series (\ref{eq:hyper}). Notice that
the parameter $b$ is a constant $\frac{1}{2} \delta$, which satisfies the assumption of Proposition \ref{prop:hyperub}. Since it is a constant,
we need to compute the hypergeometric coefficients of the smallest partitions only once in all the simulation runs, and it does not add computational
burden to our method. So the upper bound (\ref{eq:hyperub}) can be calculated with minor additional computational budget.

In order to solve the equation (\ref{eq:Newton}), we use the Newton's method. A careful choice of the initial guess is necessary to enhance the convergence.
As explained before,
the conditional distribution $F(v;x,y,x_{\scriptscriptstyle T})$ is approximated as the normal one with mean $\mu(x_{\scriptscriptstyle T})$
and variance $\sigma(x_{\scriptscriptstyle T})^2$. Therefore, we take the initial guess $v_0$ as the solution of the equation
\begin{equation*}
    \Phi(v_0; \mu(x_{\scriptscriptstyle T}), \sigma(x_{\scriptscriptstyle T})^2) = U, \hspace{0.5cm} \text{ or } \hspace{0.5cm}
    v_0 = \Phi^{-1}(U; \mu(x_{\scriptscriptstyle T}), \sigma(x_{\scriptscriptstyle T})^2).
\end{equation*}
Then we apply the Newton's method to the function $F_{h,N}(v;x,y,x_{\scriptscriptstyle T})$ to generate a sequence which approximates the
solution of the equation (\ref{eq:Newton}):
\begin{equation*}
    v_{k+1} = v_k - \frac{F_{h,N}(v_k;x,y,x_{\scriptscriptstyle T})}{F'_{h,N}(v_k;x,y,x_{\scriptscriptstyle T})}.
\end{equation*}
We iterate the loop for a fixed number of times, usually $4$ or $5$ times, and then use the bisection search method if the sequence, generated by Newton's method,
fails to converges within the predetermined number of iterations.

We summarize the method of generating a sample from the conditional distribution of $Y_T$ given $X_T = x_{\scriptscriptstyle T}$.
\begin{alg}\label{algo:integ}
    This algorithm generates $L$ samples $Y^{(1)}_T, \cdots, Y^{(L)}_T$ from the conditional distribution of $Y_T$ given
    $X_T =X^{(1)}_T, \cdots, X^{(L)}_T$, respectively.
    \begin{enumerate}[Step (1)]
        {\setlength\itemindent{25pt}\item Determine $l_{\epsilon}$, $h$, and $N$ according to (\ref{eq:choice}),}
        {\setlength\itemindent{25pt}\item Evaluate $\psi(0, u)$, $\phi(0,u)$, $\Psi(0,u)$,
            and $V(0,u)$ at $u = - i n h$, $n = 1, \cdots, N$}
        {\setlength\itemindent{25pt}\item Evaluate $\varphi(n h;x,y,X^{(l)}_T)$ for $n = 1, \cdots, N$ and $l = 1, \cdots, L$ }
        {\setlength\itemindent{25pt}\item Generate IID uniform random numbers $U^{(1)}, \cdots, U^{(L)}$}
        {\setlength\itemindent{25pt}\item For each $l = 1, \cdots, L$, solve $F_{h,N}(Y^{(l)}_T; x, y, X^{(l)}_T) = U^{(l)}$ for $Y^{(l)}_T$}
    \end{enumerate}
\end{alg}

\section{Numerical Results} \label{sec:numer}
In this section, we compare numerically our exact simulation method with other existing methods: the Euler discretization scheme
for the WMSV model and the Broadie-Kaya method for the Heston model.

\subsection{Comparison between the Exact Sampling and the Euler Scheme}

The system of stochastic differential equations (\ref{eq:wishart}) and (\ref{eq:logprice}) does not admit a closed form solution. In such a case,
one way to simulate the model is to discretize the time interval and simulate another process, which approximates the model, on these discrete time grids.
Euler discretization is a basic discretization scheme. Let $0 = t_0 < t_1 < \cdots < t_N = T$ be a partition of the time interval $[0, T]$ into
$N$ equal subintervals, i.e., $\Delta t = t_i - t_{i-1} = T/N$, $i = 1, \cdots, N$. We discretize the Wishart process (\ref{eq:wishart}) by setting
$\hat{X}_{t_0} = x$, and
\begin{equation*}
    \hat{X}_{t_i} = \left(\hat{X}_{t_{i-1}} + (\delta \Sigma^{\top}\Sigma + H \hat{X}_{t_{i-1}} + \hat{X}_{t_{i-1}} H^{\top}) \Delta t
        + {\textstyle \sqrt{\hat{X}_{t_{i-1}}}} \Delta W_{t_i} \Sigma
        + \Sigma^{\top} \big(\Delta W_{t_i}\big)^{\top}  {\textstyle \sqrt{\hat{X}_{t_{i-1}}}} \right)^+,
\end{equation*}
where $\Delta W_{t_i} = W_{t_i} - W_{t_{i-1}}$. Here, $A^+$ denotes the positive part of a symmetric matrix $A$: we set
$A^+ = O \diag(\lambda_1^+, \cdots, \lambda_d^+) O^{\top}$ if $A = O \diag(\lambda_1, \cdots, \lambda_d) O^{\top}$ and $O$ an orthogonal matrix.
In order to make $\hat{X}_{t_i}$ positive semidefinite, we take the positive part at each time grid. The discretization of the log-price process
(\ref{eq:logprice}) is
\begin{equation*}
    \hat{Y}_{t_i} = \hat{Y}_{t_{i-1}} + \Big(r - \frac{1}{2} \tr[\hat{X}_{t_{i-1}}]\Big) \Delta t
        + \tr\Big[ {\textstyle \sqrt{\hat{X}_{t_{i-1}}}} \Big(\Delta W_{t_i} R^{\top} + \Delta Z_{t_i} \sqrt{I_d - RR^{\top}}\Big)\Big],
\end{equation*}
where $\Delta Z_{t_i} = Z_{t_i} - Z_{t_{i-1}}$. Euler discretization scheme is easy to understand and implement. But it has serious disadvantages: the
distribution of the samples drawn from Euler scheme is different from the true distribution, and it may require very fine discretization to
reduce the bias as small as acceptable. These will be illustrated by numerical results.

We will compare the exact simulation method with the Euler discretization scheme in the terms of distributions, convergence and performance. The simulation
experiments in this paper were performed on a desktop PC with an Intel Core2 Quad $3.00$ GHz processor and $3.25$ GB RAM, running Windows XP Professional.
We used the MATLAB in the version R2009a.

In order to demonstrate the difference of empirical distributions of exact simulation method and Euler discretization method, we generated $10^6$ samples and
estimated the density functions of $Y_T$ using those samples. To estimate density functions, we used the MATLAB function {\sf ksdensity} which
computes a density estimate using kernel density estimation. The theoretical density function was obtained by a numerical
inversion of the characteristic function (\ref{eq:Laplace}). Figure \ref{figure:density} shows the density functions of $Y_T$ for the parameter set:
\begin{equation*}
    \delta = 1.1, \hspace{0.5cm} r = 0, \hspace{0.5cm} y = 0, \hspace{0.5cm} T = 1.0,
\end{equation*}
\begin{eqnarray*}
    x &=& \left[
            \begin{array}{cc}
                0.0298 & 0.0119\\
                0.0119 & 0.0108
            \end{array}
        \right], \hspace{0.5cm}
    H = \left[
            \begin{array}{cc}
                - 1.2479 & - 0.8985\\
                - 0.0820 & - 1.1433
            \end{array}
        \right],\\
    \Sigma &=& \left[
            \begin{array}{cc}
                0.3417 & 0.3493\\
                0.1848 & 0.3090
            \end{array}
        \right], \hspace{0.5cm}
    R = \left[
            \begin{array}{cc}
                -0.2243 & -0.1244\\
                -0.2545 & -0.7230
            \end{array}
        \right].\\
\end{eqnarray*}
The set of parameters except $\delta$ for the model is taken from Da Fonseca and Grasselli\footnote{In their paper, the calibrated $\delta$
for DAX index is $0.5776$, and it violate the assumption $\delta > d-1$. Indeed, in that case, the Wishart process $X$ is no longer an affine process
in the sense of Cuchiero et. al. \cite{CFMT11}.} \cite{DG11}. The number of time steps for the Euler method is set to $25$. From Figure
\ref{figure:density}, we observe that the distribution of the samples drawn from Euler method apparently differs from the true distribution.

\begin{figure}
        \centering
        \begin{subfigure}[b]{0.45\textwidth}
                \centering
                \includegraphics[width=\textwidth]{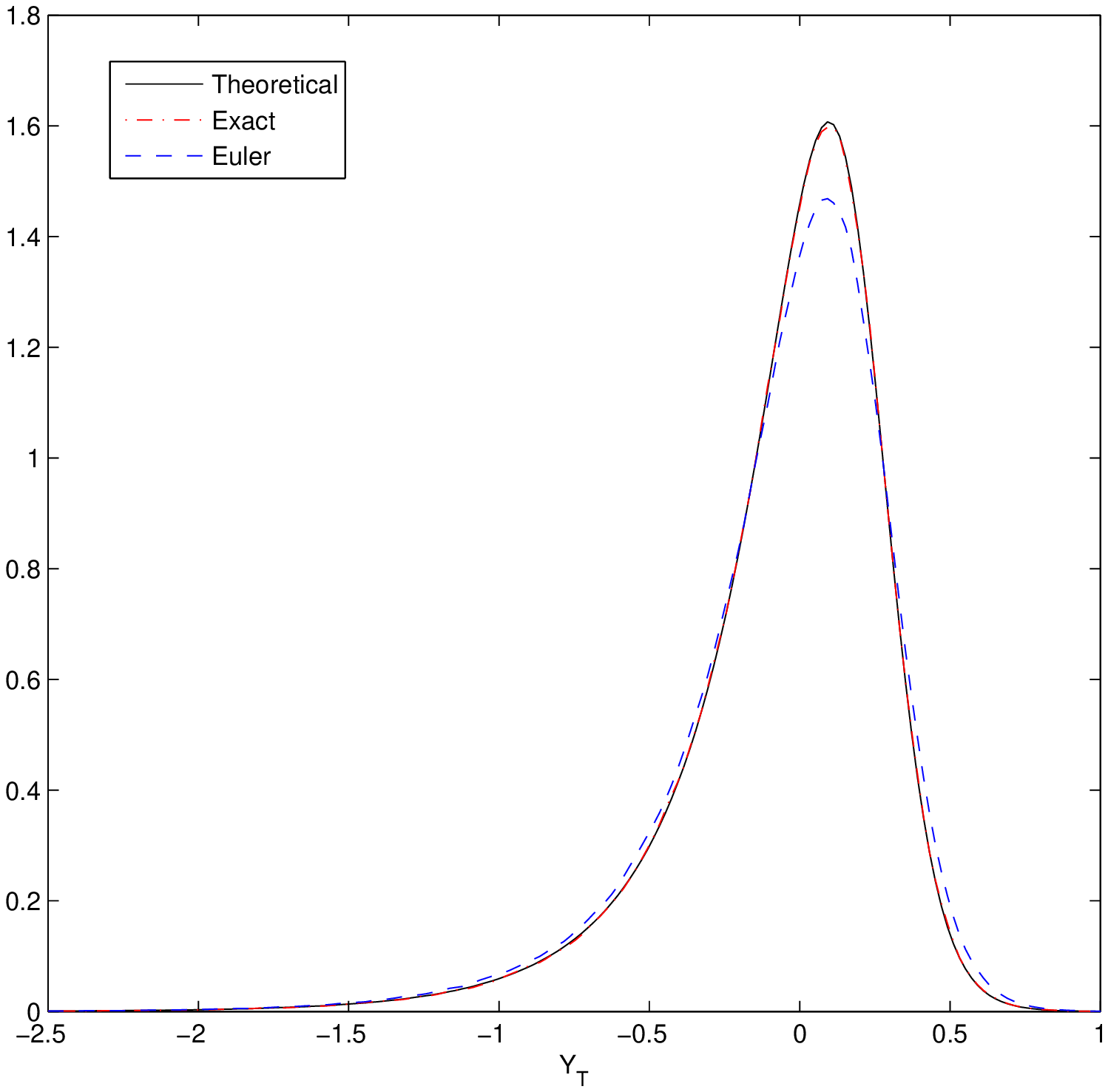}
%                \caption*{(a)}
        \end{subfigure}
        \quad
        \begin{subfigure}[b]{0.45\textwidth}
                \centering
                \includegraphics[width=\textwidth]{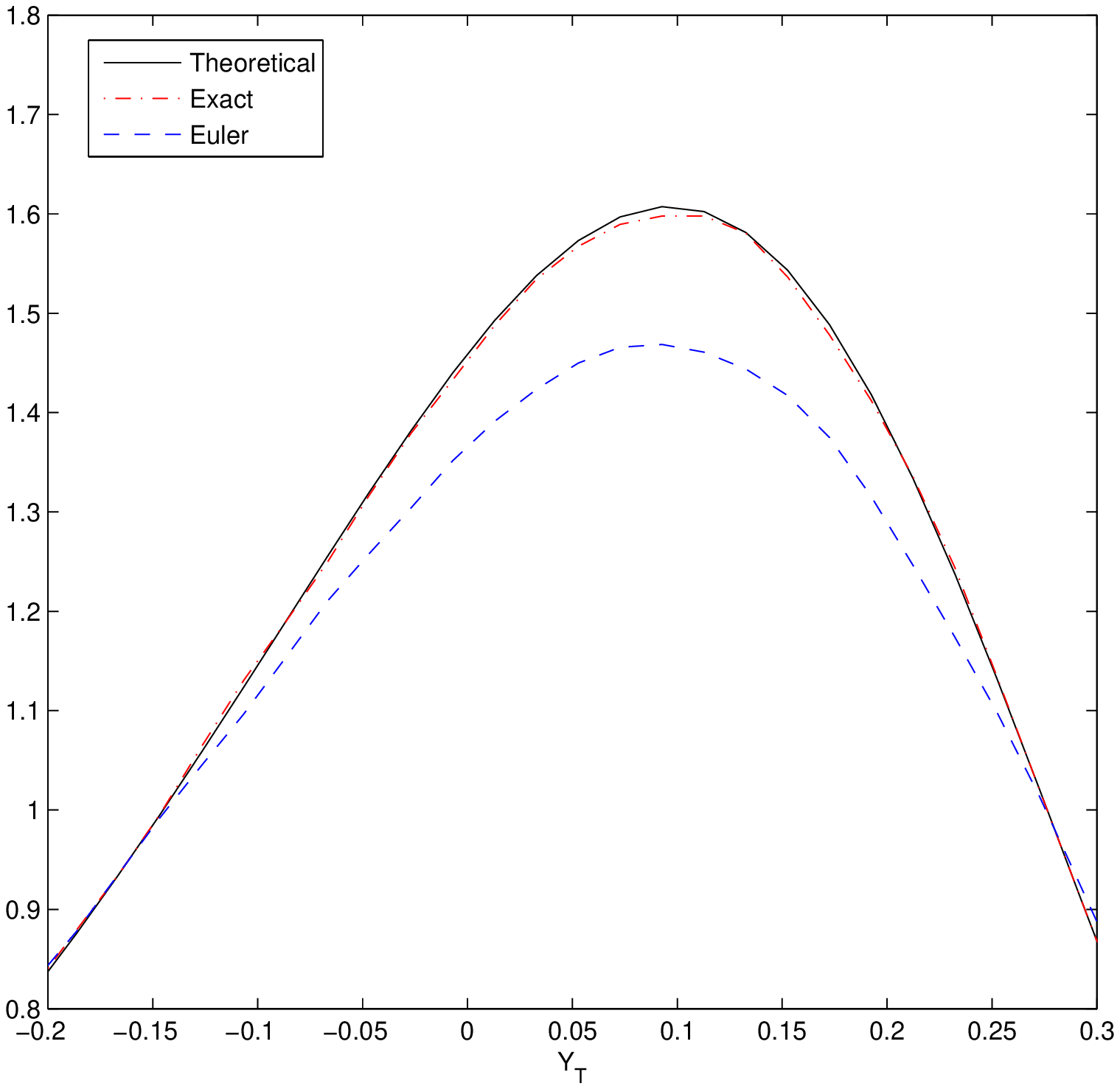}
%                \caption*{(b)}
        \end{subfigure}
        \\
        \begin{subfigure}[b]{0.45\textwidth}
                \centering
                \includegraphics[width=\textwidth]{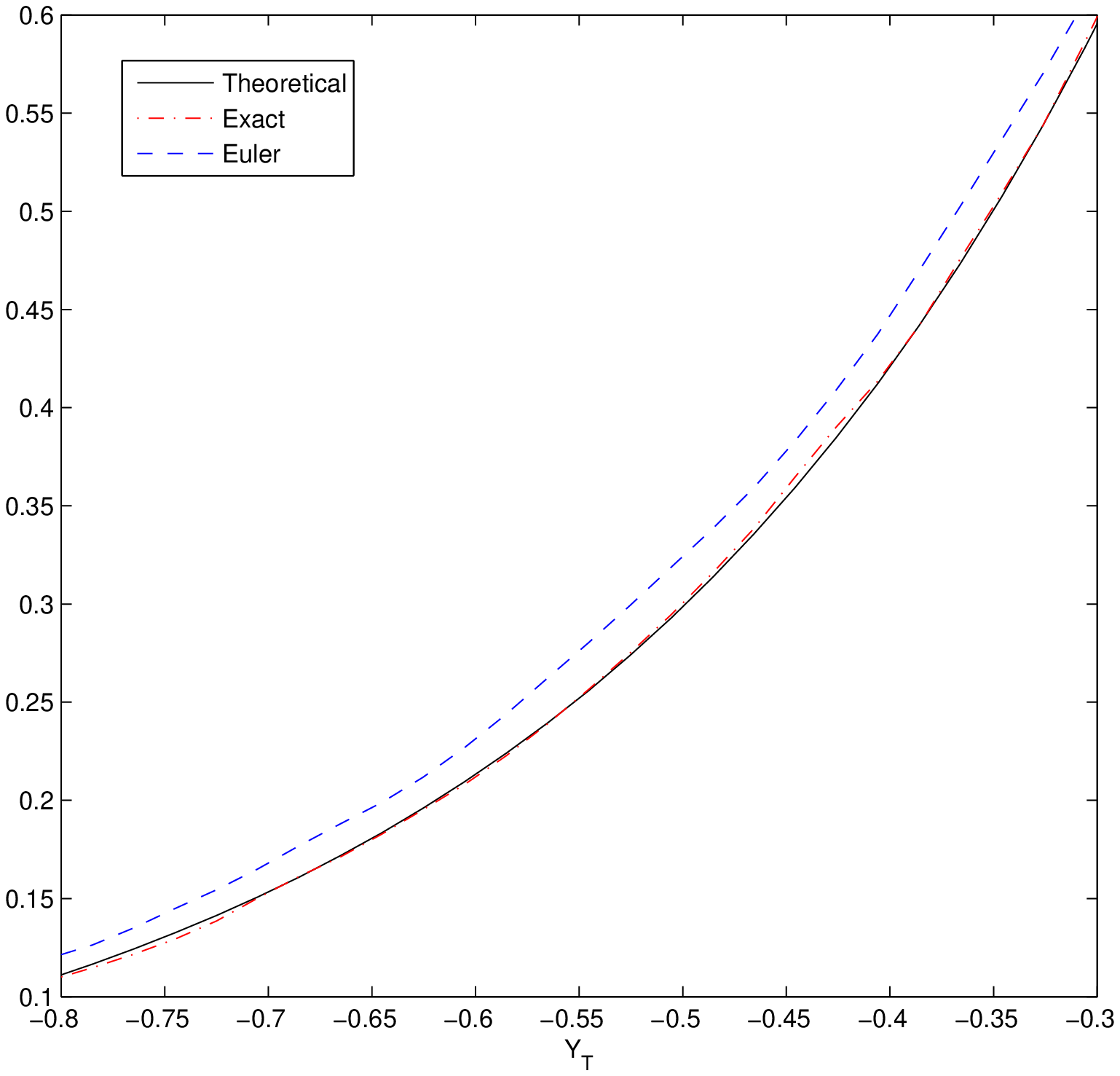}
%                \caption*{(c)}
        \end{subfigure}
        \begin{subfigure}[b]{0.45\textwidth}
                \centering
                \includegraphics[width=\textwidth]{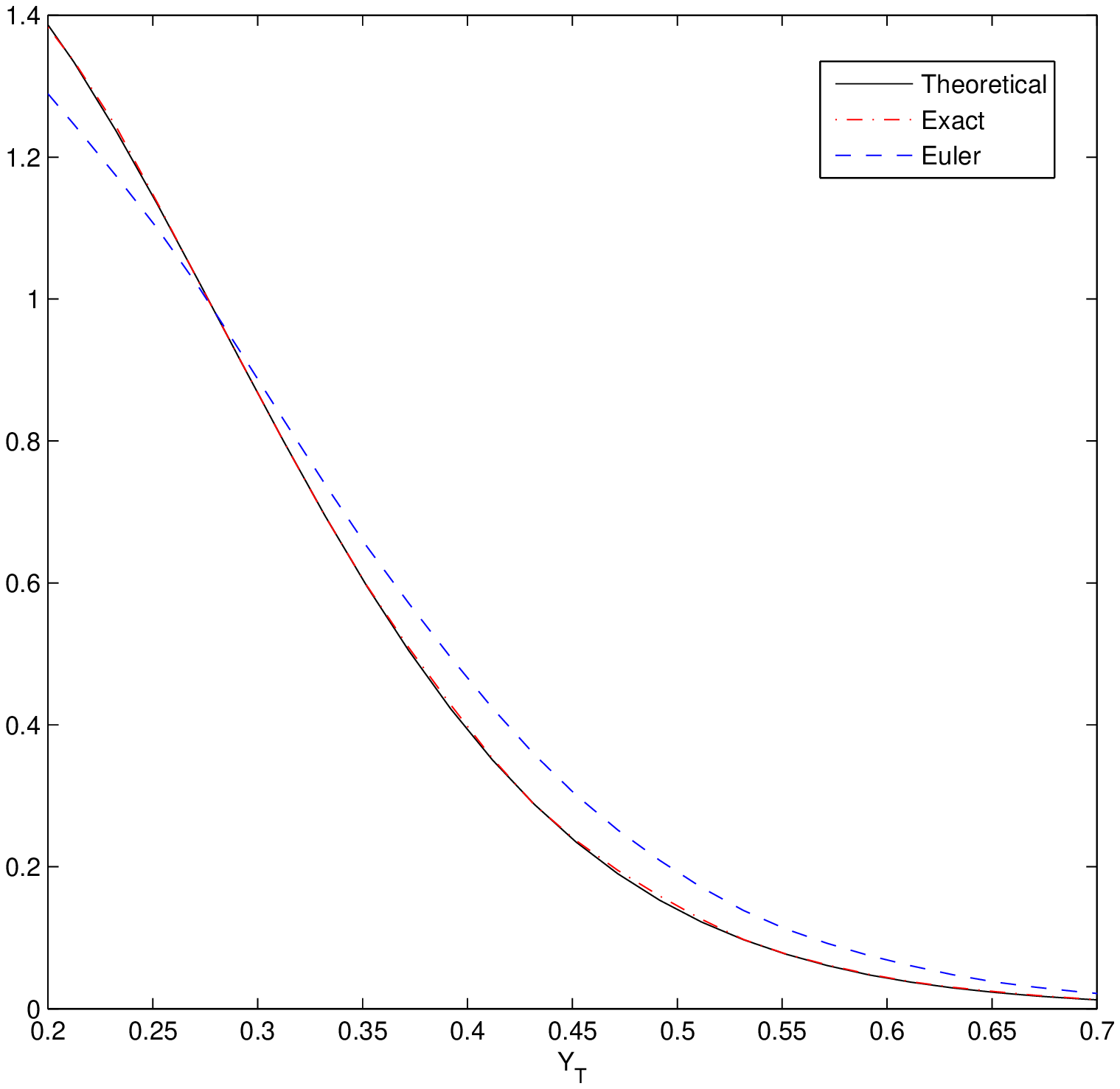}
%                \caption*{(d)}
        \end{subfigure}
        \caption{Density estimates of the log-price :  The upper left panel shows the overall shapes of density estimates, and
            the other three panels magnify different parts of the upper left one. }
        \label{figure:density}
\end{figure}

%\begin{figure}
%    \begin{center}
%        \includegraphics[width=0.9\textwidth]{figure_density}
%    \end{center}
%    \caption{Density estimates of the log-price : The right panel magnifies the peak in the left panel.}
%    \label{figure:density}
%\end{figure}

The errors in density estimates result from the discretization bias of the Euler scheme. Even worse, there is no appropriate way to measure the bias errors.
In contrast, the error of exact simulations comes mostly from the variance, and it can be easily measured by the sample variance. To illustrate such
a difference, we present European call option price estimates obtained by two simulation methods. The model parameters are the same as above except
$\delta = 3.2$. We consider an {\it at-the-money} call option, i.e., the strike price $K$ is set to equal to $S_0 = e^{Y_0} = 1$.
We computed the theoretical price of the option by applying the Carr and Madan approach \cite{CM99} to the Laplace transform formula (\ref{eq:Laplace}).
Table \ref{table:MCDGT1} shows the Monte-Carlo estimates of the option price.
The Euler method gives unreliable results : the theoretical price lies in only two (out of eight) $95\%$ confidence intervals built by the Euler method.
Furthermore, the computation times spent by the Euler method are much longer than those for the exact simulation method.

\begin{table}
    \centering
    \begin{tabular}{cccccc}
        \hline
        \multirow{2}{*}{Methods} & No. of & No. of & \multirow{2}{*}{MC estimates} & \multirow{2}{*}{std. errors} & Time\\
        & time steps & simulation runs &  &  & (sec)\\
        \hline
        \multirow{4}{*}{Exact} & \multirow{4}{*}{N/A} & $50000$ & $0.192415$ & $0.1430 \times 10^{-2}$ & $12.14$\\
        & & $100000$ & $0.191451$ & $0.1009 \times 10^{-2}$ & $24.28$\\
        & & $500000$ & $0.192143$ & $0.4535 \times 10^{-3}$ & $121.4$\\
        & & $1000000$ & $0.191513$ & $0.3202 \times 10^{-3}$ & $242.8$\\
        \hline
        \multirow{8}{*}{Euler}& \multirow{4}{*}{50} & $50000$ & $0.195202^*$ & $0.1456\times 10^{-2}$ & $179.4$\\
        & & $100000$ & $0.194829^*$ & $0.1034\times10^{-2}$ & $358.7$\\
        & & $500000$ & $0.193827^*$ & $0.4603\times10^{-3}$ & $1793.5$\\
        & & $1000000$ & $0.194197^*$ & $0.3259\times10^{-3}$ & $3587.0$\\
        \cline{2-6}
        & \multirow{4}{*}{100} & $50000$ & $0.194016$ & $0.1461\times10^{-2}$ & $358.4$\\
        & & $100000$ & $0.193264$ & $0.1027\times10^{-2}$ & $716.8$\\
        & & $500000$ & $0.193008^*$ & $0.4570\times10^{-3}$ & $3584.2$\\
        & & $1000000$ & $0.193073^*$ & $0.3234\times10^{-3}$ & $7168.5$\\
        \hline
    \end{tabular}
    \caption{The Monte-Carlo estimates of the call option price with $\delta = 3.2$: The theoretical option price is $0.191575$.
        The asterisked numbers are those for which the theoretical price lies outside of the $95\%$ confidence interval.}
    \label{table:MCDGT1}
\end{table}

The accuracy of the Euler method gets worse as $\delta$ decreases. For a small $\delta$, the discretized
Wishart process crosses more frequently the boundary of the cone of symmetric positive definite matrices. At each time it passes the boundary,
its negative part is truncated to make it positive semidefinite, and such truncations might cause serious bias error. To illustrate such case,
we set $\delta = 1.1$, and all other parameters are set as the same as above. Table \ref{table:MCDGT2} reveals that the Euler discretization method
hardly gives reliable estimates for $\delta = 1.1$. The theoretical price never lies in the $95\%$ confidence intervals built by the Euler method,
but it always lies in the $95\%$ confidence intervals from the exact simulation.

\begin{table}
    \centering
    \begin{tabular}{cccccc}
        \hline
        \multirow{2}{*}{Methods} & No. of & No. of & \multirow{2}{*}{MC estimates} & \multirow{2}{*}{std. errors} & Time\\
        & time steps & simulation runs &  &  & (sec)\\
        \hline
        \multirow{4}{*}{Exact} & \multirow{4}{*}{N/A} & $50000$ & $0.114129$ & $0.7381 \times 10^{-3}$ & $15.34$\\
        & & $100000$ & $0.113630$ & $0.5215 \times 10^{-3}$ & $30.68$\\
        & & $500000$ & $0.112916$ & $0.2320 \times 10^{-3}$ & $153.40$\\
        & & $1000000$ & $0.113085$ & $0.1642 \times 10^{-3}$ & $306.80$\\
        \hline
        \multirow{8}{*}{Euler}& \multirow{4}{*}{50} & $50000$ & $0.117013^*$ & $0.7654\times 10^{-3}$ & $180.33$\\
        & & $100000$ & $0.117020^*$ & $0.5441\times10^{-3}$ & $360.67$\\
        & & $500000$ & $0.116698^*$ & $0.2428\times10^{-3}$ & $1803.34$\\
        & & $1000000$ & $0.116940^*$ & $0.1718\times10^{-3}$ & $3606.67$\\
        \cline{2-6}
        & \multirow{4}{*}{100} & $50000$ & $0.116940^*$ & $0.7541\times10^{-3}$ & $360.54$\\
        & & $100000$ & $0.115963^*$ & $0.5321\times10^{-3}$ & $721.09$\\
        & & $500000$ & $0.115333^*$ & $0.2383\times10^{-3}$ & $3605.45$\\
        & & $1000000$ & $0.115363^*$ & $0.1684\times10^{-3}$ & $7210.90$\\
        \hline
    \end{tabular}
    \caption{The Monte-Carlo estimates of the call option price with $\delta = 1.1$: The theoretical option price is $0.113000$.
        The asterisked numbers are those for which the theoretical price lies outside of the $95\%$ confidence interval.}
    \label{table:MCDGT2}
\end{table}

\subsection{Comparison with Broadie and Kaya Method}
To apply our method to the Heston model, we used the explicit formula (\ref{eq:condChfH}) instead of (\ref{eq:condChf}).
With (\ref{eq:condChfH}), we do not need to solve the ordinary differential equations, and we can use the built-in MATLAB function {\sf besseli}
\footnote{The MATLAB funciton {\sf besseli} computes the modified Bessel function of the first kind $I_{\nu}(\cdot)$ using the algorithm of Amos
\cite{Am86}.} instead of the algorithm of Koev and Edelman \cite{KE06} for the computation of the hypergeometric function of matrix arguments.

We pointed out in Section \ref{sec:issue} that a careful choice of the grid size $h$ and the truncation number $N$ is necessary in our method.
The same is true for any Fourier inversion based simulation method, e.g., Broadie and Kaya's method \cite{BK06}.
In the paper of Broadie and Kaya, they did not give any guideline on how to choose $h$ and $N$, and they argued that appropriate $h$ and $N$ can be
found by trial and error. But it is difficult to find appropriate $h$ and $N$, and it will be illustrated by numerical experiments.

This section is designed to show how the various choices of $h$ and $N$ in (\ref{eq:BKdist}) affect the accuracy and performance of the Broadie and Kaya's method,
and to compare performance of their method with ours. For the purpose, we generate $10^6$ samples of the Heston model using our method, and we also
generate the same number of samples using Broadie and Kaya's method for various combinations of $h$ and $N$. We used them to obtain Monte-Carlo estimates of the European
call option price. The model parameter is set to
\begin{eqnarray*}
    x &=& 0.010201, \hspace{0.5cm} y = \log(100), \hspace{0.5cm} \kappa = 6.21, \hspace{0.5cm} \theta = 0.019,\\
    \sigma &=& 0.61, \hspace{0.5cm} \rho = -0.7, \hspace{0.5cm} r = 0.0319, \hspace{0.5cm} T = 1.0, \hspace{0.5cm} K = 100.
\end{eqnarray*}
This set of parameters is taken from Duffie et al. \cite{DPS00}: it was calibrated to the option prices for S\&P 500 on November 2, 1993, and
used in Broadie and Kaya \cite{BK06} as well. Table \ref{tab:heston1} shows the simulation results.\footnote{In our method, the $h$ and $N$ are
automatically chosen by (\ref{eq:choice})} From the table, the pair of
$h = 32$ and $N = 25$ seems to be the best choice of $h$ and $N$ among outputs of the experiments.
But, even with that choice of $h$ and $N$, the Broadie and Kaya's method is slower than our method.

\begin{table}
    \centering
    \begin{tabular}{ccccccc}
        \hline
        Methods & $h$ & $N$ & $N h$ &  MC estimates & std. errors & Time(sec)\\
        \hline
        KK &  N/A & N/A & N/A & $6.8125$ & $7.4302\times10^{-3}$ & $59.77$\\
        \hline
        \multirow{12}{*}{BK} & \multirow{3}{*}{$8$} & $100$ & $800$ & $6.8276^*$&$7.4377\times 10^{-3} $&$280.4$\\
        & & $200$ & $1600$ & $6.8059$&$7.4206\times 10^{-3} $&$591.1$\\
        & & $400$ & $3200$ &$6.7993$&$7.4208\times 10^{-3} $&$1172.0$\\
        \cline{2-7}
        &\multirow{3}{*}{$16$} & $50$ & $800$ &$6.8162$ & $7.4299\times 10^{-3}$&$123.4$\\
        & & $100$ & $1600$ & $6.8025$ & $7.4220\times 10^{-3}$ & $297.0$\\
        & & $200$ & $3200$ & $6.8211^*$ & $7.4257\times 10^{-3}$ & $584.8$\\
        \cline{2-7}
        &\multirow{3}{*}{$32$}& $25$ & $800$ & $ 6.8098$ &$ 7.4302\times 10^{-3}$&$68.0$\\
        & & $50$ & $1600$ & $6.8011$&$ 7.4144\times 10^{-3}$&$132.6$\\
        & & $100$& $3200$ & $6.8191$&$ 7.4286\times 10^{-3}$&$293.6$\\
        \cline{2-7}
        &\multirow{3}{*}{$64$}& $13$ & $832$ & $6.8346^*$&$ 7.4252\times 10^{-3}$&$41.16$\\
        & & $25$ & $1600$ & $6.8222^*$&$7.4173\times 10^{-3} $&$72.52$\\
        & & $50$ & $3200$ & $6.8184$&$7.4137\times 10^{-3} $&$130.8$\\
        \hline
    \end{tabular}
    \caption{The Monte-Carlo estimates of the call option price with $T = 1.0$: The theoretical option price is $6.8061$.
    The asterisker numbers are those for which the theoretical price lies outside of the $95\%$ confidence interval.
    KK refers to our method and BK refers to the method of Broadie and Kaya. }
    \label{tab:heston1}
\end{table}

Through this experiment, we found that the Broadie-Kaya method requires many trials to find the optimal choice of $h$ and $N$. It is even worse
that the optimal choice of $h$ and $N$ is sensitive to the change of the model parameters. To demonstrate the sensitivity of their method,
we give another experimental results with the same parameter except $T = 0.25$. Table \ref{tab:heston2} shows that the true option price
never lies in the confidence intervals of their methods for the same choices of $h$ and $N$. In contrast, our automatic choice of
$h$ and $N$ according to (\ref{eq:choice}) makes our method adaptive to the parameter changes.

\begin{table}
    \centering
    \begin{tabular}{ccccccc}
        \hline
        Methods & $h$ & $N$ & $N h$ &  MC estimates & std. errors & Time(sec)\\
        \hline
        KK &  N/A & N/A & N/A & $2.6720$ & $2.9214\times10^{-3}$ & $78.18$\\
        \hline
        \multirow{12}{*}{BK} & \multirow{3}{*}{$8$} & $100$ & $800$ & $2.7883^*$&$2.8438\times 10^{-3}$ & $279.5$\\
        & & $200$ & $1600$ & $2.7597^*$ &$2.8631\times 10^{-3}$ & $608.0$\\
        & & $400$ & $3200$ & $2.7463^*$ &$2.8705\times 10^{-3}$ & $1290.2$\\
        \cline{2-7}
        &\multirow{3}{*}{$16$} & $50$ & $800$ &$2.7853^*$ & $2.8435\times 10^{-3}$&$124.0$\\
        & & $100$ & $1600$ & $2.7587^*$ & $2.8664\times 10^{-3}$ & $306.6$\\
        & & $200$ & $3200$ & $2.7483^*$ & $2.8796\times 10^{-3}$ & $647.7$\\
        \cline{2-7}
        &\multirow{3}{*}{$32$}& $25$ & $800$ & $2.7802^*$ & $2.8408\times 10^{-3}$ & $67.3$\\
        & & $50$ & $1600$ & $2.7610^*$ & $2.8678\times 10^{-3}$ & $138.7$\\
        & & $100$& $3200$ & $2.7488^*$ & $2.8798\times 10^{-3}$ & $327.9$\\
        \cline{2-7}
        &\multirow{3}{*}{$64$}& $13$ & $832$ & $2.7845^*$ & $2.8433\times 10^{-3}$&$40.63$\\
        & & $25$ & $1600$ & $2.7651^*$ &$2.8695\times 10^{-3}$ & $75.19$\\
        & & $50$ & $3200$ & $2.7445^*$ &$2.8736\times 10^{-3}$ & $150.9$\\
        \hline
    \end{tabular}
    \caption{The Monte-Carlo estimates of the call option price with $T = 0.25$: The theoretical option price is $2.6709$.
    The asterisked numbers are those for which the theoretical price lies outside of the $95\%$ confidence interval.
    KK refers to our method and BK refers to the method of Broadie and Kaya. }
    \label{tab:heston2}
\end{table}

\section{Concluding Remarks} \label{sec:con}
We proposed a method for the exact simulation of the asset price and volatility factor of Wishart multidimensional stochastic volatility model(WMSV).
Our simulation method is inspired by the Fourier inversion techniques of Broadie and Kaya \cite{BK06}, and based on the analysis of the conditional
Laplace transform of the asset price given a volatility level. The proposed simulation method can be used to generate unbiased price estimators for
path-independent or mildly path-dependent options such as Bermudan options and American call options with discrete dividends.

We illustrated the accuracy and speed of the exact simulation method by numerical experiments. The numerical results confirmed that the exact
simulation method gives accurate simulation results and it is faster than the crude Euler discretization method.
The numerical comparison on the Heston model revealed that our method is more adaptive than the original exact simulation method
of Broadie and Kaya.
%But the speed of our method is not fast enough to generate a price estimator of strong path-dependent options
%such as barrier options and lookback options.
%For pricing such options under the WMSV model, it is necessary to develop accurate and fast discretization
%schemes.

In the paper of Broadie and Kaya \cite{BK06}, the authors extended the exact simulation method for the Heston model to other affine jump diffusion models.
With the same approach as theirs, the exact simulation method for the WMSV model can be extended to single asset matrix affine jump
diffusion models \cite{LT08} provided that the jumps in the model has the constant intensity and the jump size can be sampled exactly.
These extensions are straightforward and explained well in Broadie and Kaya \cite{BK06}.

\appendix

\section{Appendix}

\subsection{Noncentral Wishart Distributions \& Special Functions} \label{sec:def}
This section is intended to recall the definitions of noncentral Wishart distributions and some multivariate special functions.
For a detailed discussion on multivariate distributions and special functions, refer to Muirhead \cite{Mu82}.

The probability density function of the noncentral Wishart distribution involves two special functions: the multivariate gamma function
and the hypergeometric function of matrix arguments. We start with the multivariate gamma function, which is defined in terms of
an integral over the cone $S_d^{++}$ of symmetric positive definite $d \times d$ matrices.

\begin{defn}
The multivariate gamma function, denoted by $\Gamma_d(a)$, is defined to be
\begin{equation*}
    \Gamma_d(a) = \int_{S_d^{++}} \exp(-\tr[y])(\dete[y])^{a - (d+1)/2} (dy),
        \hspace{0.5cm} \text{ for } \hspace{0.5cm} \re(a) > \frac{1}{2}(d - 1).
\end{equation*}
\end{defn}

Note that when $d = 1$, the multivariate gamma function becomes the usual gamma function, $\Gamma_1(a) = \Gamma(a)$.

In order to give the definition of the hypergeometric function of matrix arguments, we need to introduce the zonal polynomials,
which is defined in terms of partitions of positive integers. Let $k$ be a positive integer. A partition $\iota$ of
$k$ is written as $\iota = (k_1, k_2, \cdots)$, where $\sum_j k_j = k$ and $k_1 \ge k_2 \ge \cdots \ge 0$. We order the partitions
lexicographically: let $\iota = (k_1, k_2, \cdots)$ and $\tilde{\iota} = (l_1, l_2, \cdots)$ be two partitions, we write $\iota > \hat{\iota}$ if
$k_j > l_j$ for the first index $j$ at which two parts become unequal. In case $\iota = (k_1, \cdots, k_d)$ and $\tilde{\iota} = (l_1, \cdots, l_d)$ are two
partitions with $\iota > \tilde{\iota}$, we say the monomial $\alpha_1^{k_1}\cdots\alpha_d^{k_d}$ is of higher weight than the monomial
$\alpha_1^{l_1}\cdots\alpha_d^{l_d}$.

\begin{defn}
Let $y$ be a $d \times d$ complex symmetric matrix with eigenvalues $\alpha_1, \cdots, \alpha_d$ and let $\iota$ be a partition of $k$ into not more than
$d$ parts. The zonal polynomial of $y$ corresponding to $\iota = (k_1, \cdots, k_d)$, denoted by $C_{\iota}(y)$, is a symmetric, homogeneous polynomial
of degree $k$ in the eigenvalues $\alpha_1, \cdots, \alpha_d$ such that
\begin{enumerate}[(i)]
    \item The term of highest weight in $C_{\iota}(y)$ is $\alpha_1^{k_1}\cdots\alpha_d^{k_d}$, i.e.,
        \begin{equation*}
            C_{\iota}(y) = c_{\iota}\alpha_1^{k_1}\cdots\alpha_d^{k_d} + \text{ terms of lower weight},
        \end{equation*}
        where $c_{\iota}$ is a constant.
    \item $C_{\iota}(y)$ is an eigenfunction of the differential operator $\Delta_{y}$ given by
        \begin{equation*}
            \Delta_{y} = \sum_{j = 1}^d \alpha_j^2 \frac{\partial^2}{\partial \alpha_j^2}
                + \sum_{j = 1}^d \sum_{\substack{ i = 1 \\ i \neq j}}^d \frac{\alpha_j^2}{\alpha_j - \alpha_i} \frac{\partial}{\partial \alpha_j}.
        \end{equation*}
        In other words, $C_{\iota}(y)$ satisfies $\Delta_{y} C_{\iota}(y) = \lambda C_{\iota}(y)$ for some $\lambda$.
    \item As $\iota$ varies over all partitions of $k$, the zonal polynomials have unit coefficients in the expansion of $(\tr[y])^k$, i.e.,
        \begin{equation*}
            \sum_{|\iota| = k} C_{\iota} (y) = (\tr[y])^k = (\alpha_1 + \cdots + \alpha_d)^k,
        \end{equation*}
        where $|\iota|$ denotes the sum of parts of $\iota$, i.e. $|\iota| = k_1 + \cdots + k_d$.
\end{enumerate}
\end{defn}

\begin{defn}
The hypergeometric functions of matrix argument are given by
\begin{equation*}
    {}_p F_q (a_1, \cdots, a_p; b_1, \cdots, b_q; y)
        = \sum_{k=0}^{\infty} \sum_{|\iota| = k} \frac{(a_1)_{\iota} \cdots (a_p)_{\iota}} {(b_1)_{\iota} \cdots (b_p)_{\iota}}
        \frac{C_{\iota}(y)}{k!},
\end{equation*}
where the generalized hypergeometric coefficient $(a)_{\iota}$ for a partition $\iota = (k_1, \cdots, k_d)$ is given by
\begin{equation*}
    (a)_{\iota} = \prod_{j = 1}^d \Big(a - \frac{1}{2}(j - 1)\Big)_{k_j},
\end{equation*}
and $(a)_k = a(a+1)\cdots(a + k - 1)$, $(a)_0 = 1$.
Here, the argument of the function is a complex symmetric $d \times d$ matrix and the parameters $a_j, b_j$ are arbitrary complex numbers. No denominator
parameter $b_j$ is allowed to be zero or an integer or half-integer $\le \frac{1}{2} (d - 1)$, otherwise some of the denominators in the series will
vanish.
\end{defn}

Note that in case $d = 1$, there is only one partition of $k$ into not more than $d$, namely $(k)$. Therefore, the hypergeometric function of matrix arguments
becomes to the usual hypergeometric function of real variables.

Likewise the noncentral chi-square distributions, the noncentral Wishart distribution with integer degrees of freedom is defined by multiplying normal
random vectors with themselves. Consider $\delta (\in \mathbb{N})$ independent random vectors of $\reals^d$, denoted by $Z_1, \cdots, Z_{\delta}$, with
multivariate normal distributions with means $\mu_1, \cdots, \mu_{\delta}$ and the common nonsingular covariance matrix $C$. The distribution of the
random matrix
\begin{equation*}
    W = \sum_{k = 1}^{\delta} Z_k Z_k^{\top}
\end{equation*}
is called the noncentral Wishart distribution with $\delta$ degrees of freedom, covariance matrix $C$, and matrix of noncentrality parameter
$\Omega = C^{-1} \sum_{k = 1}^{\delta} \mu_k \mu_k^{\top}$. If $\delta \ge d$, then $W$ has a probability density function which is of the form:
\begin{equation}\label{eq:wisden}
    \frac{(\dete[y])^{(\delta - d - 1)/2}}
        {2^{d\delta/2}\Gamma_d(\textstyle \frac{1}{2} \delta) (\dete[C])^{\delta/2}}
        \exp\Big\{\textstyle - \frac{1}{2} \tr\big[ C^{-1}y + \Omega \big]\Big\}
        {}_0 F_1\Big(\textstyle \frac{1}{2}\delta ;
        \frac{1}{4} \Omega C^{-1} y\Big), \hspace{0.5cm} y \in S_d^{++}.
\end{equation}
The function (\ref{eq:wisden}) is still a density function when $\delta$ is any real number greater than $d - 1$. This observation make it possible
to extend the notion of the noncentral Wishart distribution to non-integer degrees of freedom.

\begin{defn}
Suppose that $\delta > d - 1$, $C \in S_d^{++}$, and $\Omega$ is a $d \times d$ matrix such that $C\Omega$ is symmetric positive semidefinite.
If a $d \times d$ symmetric positive definite random matrix $W$ has the probability density function (\ref{eq:wisden}), then $W$ is said to
have the noncentral Wishart distribution with $\delta$ degrees of freedom, covariance matrix $C$, and matrix of noncentrality parameter $\Omega$.
We will say $W$ has $\mathcal{W}_{d}(\delta, C, \Omega)$ distribution.
\end{defn}

\begin{prop}[Theorem 3.5.3 in Gupta and Nagar \cite{GN00}]\label{prop:wislaplace}
Suppose $W$ has noncentral Wishart distribution $\mathcal{W}_{d}(\delta, C, \Omega)$. Then its Laplace transform is given by
\begin{equation*}
\pE\big[ e^{ - \tr(\vartheta W)}\big] = \dete[I_d + 2\vartheta C]^{-\delta/2}
    \exp\Big\{- \tr\big[\vartheta(I_d + 2C\vartheta)^{-1} C \Omega\big]\Big\},
\end{equation*}
for any symmetric positive semidefinite matrix $\vartheta$.
\end{prop}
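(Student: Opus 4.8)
The plan is to compute the Laplace transform directly from the explicit density (\ref{eq:wisden}), integrating the defining series of the ${}_0 F_1$ term by term; this treats all real $\delta > d-1$ at once and sidesteps any analytic-continuation argument. Writing $p(y)$ for the density (\ref{eq:wisden}), I would first merge the two exponential factors, since $e^{-\tr[\vartheta y]}\exp\{-\frac12\tr[C^{-1}y]\} = \exp\{-\frac12\tr[(C^{-1}+2\vartheta)y]\}$. Setting $Z = \frac12(C^{-1}+2\vartheta)$ and $a = \frac12\delta$, the whole computation reduces to the single matrix integral
\begin{equation*}
    \int_{S_d^{++}} e^{-\tr[Zy]} (\dete[y])^{a - (d+1)/2}\, {}_0 F_1\Big(\tfrac12\delta;\tfrac14\Omega C^{-1}y\Big)\,(dy),
\end{equation*}
after pulling the constant $e^{-\frac12\tr[\Omega]}/(2^{d\delta/2}\Gamma_d(\frac12\delta)(\dete[C])^{\delta/2})$ out front.

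The key ingredient is the zonal-polynomial Laplace identity
\begin{equation*}
    \int_{S_d^{++}} e^{-\tr[Zy]} (\dete[y])^{a-(d+1)/2} C_{\iota}(Ty)\,(dy) = (a)_{\iota}\,\Gamma_d(a)\,(\dete[Z])^{-a}\, C_{\iota}(TZ^{-1}),
\end{equation*}
valid for $\re(a) > \frac12(d-1)$, which I would either cite (Muirhead) or derive from the multivariate gamma integral. Expanding ${}_0 F_1$ by its definition (\ref{eq:hyper}) with $T = \frac14\Omega C^{-1}$ and integrating each term, the crucial point is that the generalized hypergeometric coefficient $(\frac12\delta)_{\iota}$ produced by the identity cancels exactly against the factor $1/(\frac12\delta)_{\iota}$ in the series. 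What survives is $\Gamma_d(\frac12\delta)(\dete[Z])^{-\delta/2}\sum_k \frac{1}{k!}\sum_{|\iota|=k} C_{\iota}(\frac14\Omega C^{-1}Z^{-1})$, and property (iii) of the zonal polynomials, $\sum_{|\iota|=k}C_{\iota}(\cdot) = (\tr[\cdot])^k$, collapses the double sum into $\exp\{\frac14\tr[\Omega C^{-1}Z^{-1}]\}$.

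It then remains to reassemble the constants. Substituting $Z = \frac12(C^{-1}+2\vartheta)$ gives $(\dete[Z])^{-\delta/2} = 2^{d\delta/2}\dete[C^{-1}+2\vartheta]^{-\delta/2}$, so the factors $2^{d\delta/2}$ and $\Gamma_d(\frac12\delta)$ cancel and $(\dete[C])^{-\delta/2}\dete[C^{-1}+2\vartheta]^{-\delta/2} = \dete[I_d + 2\vartheta C]^{-\delta/2}$, which is the claimed determinant prefactor. For the exponent I would combine $\frac14\tr[\Omega C^{-1}Z^{-1}] - \frac12\tr[\Omega]$ using $Z^{-1} = 2(C^{-1}+2\vartheta)^{-1}$ and $C^{-1}(C^{-1}+2\vartheta)^{-1} = (I_d+2\vartheta C)^{-1}$ to get $-\tr[(I_d+2\vartheta C)^{-1}\vartheta C\Omega]$; the elementary commutation identity $(I_d+2\vartheta C)^{-1}\vartheta = \vartheta(I_d+2C\vartheta)^{-1}$, which holds because $(I_d+2\vartheta C)\vartheta = \vartheta + 2\vartheta C\vartheta = \vartheta(I_d+2C\vartheta)$, rewrites this as the stated $-\tr[\vartheta(I_d+2C\vartheta)^{-1}C\Omega]$.

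The main obstacle is the rigorous justification of the term-by-term integration. Here I would observe that $C\Omega$ is symmetric positive semidefinite by hypothesis, so writing $C\Omega = Q^{\top}Q$ shows $\Omega C^{-1}y$ has the same eigenvalues as $Q(C^{-1}yC^{-1})Q^{\top}$, which is positive semidefinite; hence the argument of ${}_0 F_1$ has nonnegative eigenvalues, every zonal-polynomial term is nonnegative, and the interchange of sum and integral is legitimate by Tonelli's theorem. The uniform bound of Proposition \ref{prop:hyperub} gives an alternative dominated-convergence justification. As a sanity check one may instead establish the integer case $\delta \in \naturals$, $\delta \ge d$, from the representation $W = \sum_{k=1}^{\delta} Z_k Z_k^{\top}$ by multiplying the Gaussian quadratic-form transforms $\pE[e^{-Z_k^{\top}\vartheta Z_k}]$ and using $\sum_k \mu_k \mu_k^{\top} = C\Omega$; this reproduces the same formula but would need a Carlson-type analytic-continuation argument to reach non-integer $\delta$, which the direct integration avoids.
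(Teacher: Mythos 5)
Your derivation is correct. Note, however, that the paper does not prove this proposition at all: it is imported verbatim as Theorem 3.5.3 of Gupta and Nagar \cite{GN00}, so there is no internal argument to compare against. What you have written is essentially the standard textbook derivation (the one found in Muirhead and in Gupta--Nagar themselves): expand ${}_0F_1$ in zonal polynomials, apply the Laplace-transform identity $\int_{S_d^{++}} e^{-\tr[Zy]}(\dete[y])^{a-(d+1)/2}C_{\iota}(Ty)\,(dy)=(a)_{\iota}\Gamma_d(a)(\dete[Z])^{-a}C_{\iota}(TZ^{-1})$ term by term, and let the $(\frac12\delta)_{\iota}$ factors cancel so that property (iii) of the zonal polynomials resums the series to an exponential. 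All the bookkeeping checks out: $Z=\frac12(C^{-1}+2\vartheta)$ is positive definite for $\vartheta$ positive semidefinite, so the identity applies with $a=\frac12\delta>\frac12(d-1)$; the determinant factors combine via $\dete[C]\,\dete[C^{-1}+2\vartheta]=\dete[I_d+2C\vartheta]=\dete[I_d+2\vartheta C]$; and the exponent $\frac12\tr[\Omega((I_d+2\vartheta C)^{-1}-I_d)]=-\tr[(I_d+2\vartheta C)^{-1}\vartheta C\Omega]$ becomes the stated expression via the commutation identity $(I_d+2\vartheta C)^{-1}\vartheta=\vartheta(I_d+2C\vartheta)^{-1}$. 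Your justification of the interchange is also sound: since $C\Omega$ is symmetric positive semidefinite, $\Omega C^{-1}y$ is similar (up to cyclic permutation) to the positive semidefinite matrix $QC^{-1}yC^{-1}Q^{\top}$, so every zonal term is nonnegative and Tonelli applies; this is cleaner than the dominated-convergence alternative. The only thing your write-up adds beyond what is strictly needed is the final paragraph on the integer-$\delta$ sanity check, which, as you say, would require an extra continuation argument and is best left as a remark.
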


\subsection{Wishart Processes with Time-varying Linear Drift}\label{sec:wisharttime}
In this section, we slightly extend the notion of Wishart processes in order to compute the conditional Laplace transform of log-price given volatility
level.

A symmetric positive semidefinite matrix valued stochastic process $X$ is called a Wishart process with time-varying linear drift if it is a
weak solution of the following stochastic differential equation
\begin{equation}\label{eq:wisharttime}
    dX_t = (\delta\Sigma^{\top}\Sigma + H(t)X_t + X_t H(t)\tran ) dt + \sqrt{X_t}dW_t \Sigma + \Sigma\tran dW_t\tran \sqrt{X_t},
        \text{ with } X_0 = x,
\end{equation}
where $\delta \ge d-1$, $\Sigma$ is a $d\times d$ matrix, $x$ is a symmetric positive semidefinite matrix, $H(\cdot)$ is
a $d \times d$ matrix valued continuous function, and $W$ is a standard $d \times d$ matrix Brownian motion.
Wishart process with time-varying linear drift has noncentral Wishart marginal distributions.

\begin{prop}\label{prop:wisharttimelaplace}
Let $X$ be a Wishart process with time-varying linear drift which solves (\ref{eq:wisharttime}). Then $X_T$ has
noncentral Wishart distribution $\mathcal{W}_{d}$$(\delta$, $V(0)$, $V(0)^{-1}$ $\Psi(0)^{\top}$$x$$\Psi(0))$, where $V(t)$ and $\Psi(t)$
are solutions of following system of ordinary differential equations
\begin{equation*}
    \left\{
        \begin{array}{rcl}
            \frac{d}{dt} \Psi(t) &=& - H(t)^{\top} \Psi(t),\\
            \frac{d}{dt} V(t) &=& - \Psi(t)^{\top} \Sigma^{\top}\Sigma \Psi(t),\\
        \end{array}
    \right.
\end{equation*}
with the terminal value $\Psi(T) = I_d$ and $V(T) = 0$.
\begin{proof}
Using standard argument(e.g., see Appendix B of Gourieroux and Sufana \cite{GS10}), one may prove that
for a symmetric positive semidefinite matrix $\vartheta$,
\begin{equation*}
\pE_x\big[ e^{ - \tr(\vartheta X_T)}\big] = e^{ - \hat{\phi}(0, \vartheta) - \tr(\hat{\psi}(0, \vartheta)x)},
\end{equation*}
where $\hat{\phi}$ and $\hat{\psi}$ are the solution of following equations
\begin{equation*}
    \left\{
        \begin{array}{ll}
            \partial _t \hat{\psi}(t, \vartheta) =& 2 \hat{\psi}(t, \vartheta) \Sigma^{\top} \Sigma \hat{\psi}(t, \vartheta)
                - H(t)^{\top}\hat{\psi}(t,\vartheta) - \hat{\psi}(t,\vartheta)H(t),\\
            \partial_t \hat{\phi}(t, \vartheta) =& -\delta \tr[\hat{\psi}(t,\vartheta)\Sigma^{\top}\Sigma],
        \end{array}
    \right.
\end{equation*}
with terminal values $\hat{\phi}(T,\vartheta) = 0$ and $\hat{\psi}(T, \vartheta) = \vartheta$. Using differentiation rules
$\frac{d}{dt} A(t)^{-1} = - A(t)^{-1} \frac{d}{dt} A(t) A(t)^{-1}$ and $\frac{d}{dt} \ln \dete[A(t)] = \tr[A(t)^{-1} \frac{d}{dt} A(t)]$,
one may check that
\begin{equation*}
    \left\{
        \begin{array}{ll}
            \hat{\psi}(t, \vartheta) =& \Psi(t) \vartheta(I_d + 2 V(t) \vartheta)^{-1} \Psi(t)^{\top},\\
            \hat{\phi}(t, \vartheta) =& \frac{\delta}{2} \ln \dete[I_d + 2 \vartheta V(t)],
        \end{array}
    \right.
\end{equation*}
solves the above system of differential equations. Therefore,
\begin{equation*}
\pE_x\big[ e^{ - \tr(\vartheta X_T)}\big] = \dete[I_d + 2\vartheta V(0)]^{-\delta/2}
    \exp\Big\{- \tr\big[\vartheta(I_d + 2V(0)\vartheta)^{-1} \Psi(0)^{\top} x \Psi(0)\big]\Big\}.
\end{equation*}
Since Laplace transform uniquely characterizes a distribution, $X_T$ has noncentral Wishart distribution $\mathcal{W}_{d}$$(\delta$, $V(0)$,
$V(0)^{-1}$ $\Psi(0)^{\top}$$x$$\Psi(0))$ by Proposition \ref{prop:wislaplace}.
\end{proof}
\end{prop}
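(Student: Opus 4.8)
The plan is to characterize the law of $X_T$ through its Laplace transform and to match it against the noncentral Wishart transform recorded in Proposition \ref{prop:wislaplace}. First I would fix a symmetric positive semidefinite matrix $\vartheta$ and, exploiting the affine structure of (\ref{eq:wisharttime}), posit that the transform is exponentially affine in the initial state, i.e. $\pE_x[e^{-\tr(\vartheta X_T)}] = e^{-\hat{\phi}(0,\vartheta) - \tr(\hat{\psi}(0,\vartheta)x)}$ for suitable functions $\hat{\phi}$, $\hat{\psi}$. To pin these down, I would form the candidate process $M_t = e^{-\hat{\phi}(t,\vartheta) - \tr(\hat{\psi}(t,\vartheta)X_t)}$, apply It\^o's formula using the dynamics (\ref{eq:wisharttime}), and require the drift to vanish so that $M$ is a local martingale. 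Matching the terms that are linear and quadratic in $X_t$ yields the matrix Riccati system $\partial_t\hat{\psi} = 2\hat{\psi}\Sigma^{\top}\Sigma\hat{\psi} - H(t)^{\top}\hat{\psi} - \hat{\psi}H(t)$ together with $\partial_t\hat{\phi} = -\delta\tr[\hat{\psi}\Sigma^{\top}\Sigma]$, subject to the terminal data $\hat{\psi}(T,\vartheta)=\vartheta$ and $\hat{\phi}(T,\vartheta)=0$ that make $M_T = e^{-\tr(\vartheta X_T)}$. A standard localization argument upgrades the local martingale to a true martingale, whence $M_0 = \pE_x[M_T]$ gives the transform in terms of $(\hat\phi,\hat\psi)$.

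The crux is to solve the Riccati system in closed form by linearization. I would propose the ansatz $\hat{\psi}(t,\vartheta) = \Psi(t)\vartheta(I_d + 2V(t)\vartheta)^{-1}\Psi(t)^{\top}$ and $\hat{\phi}(t,\vartheta) = \frac{\delta}{2}\ln\dete[I_d + 2\vartheta V(t)]$, where $\Psi$ and $V$ obey the announced \emph{linear} equations $\frac{d}{dt}\Psi = -H(t)^{\top}\Psi$ and $\frac{d}{dt}V = -\Psi^{\top}\Sigma^{\top}\Sigma\Psi$ with $\Psi(T)=I_d$, $V(T)=0$. Verifying this is the main obstacle: it demands careful matrix calculus, differentiating the inverse through $\frac{d}{dt}A^{-1} = -A^{-1}(\frac{d}{dt}A)A^{-1}$ and the log-determinant through $\frac{d}{dt}\ln\dete[A] = \tr[A^{-1}\frac{d}{dt}A]$, while keeping scrupulous track of the non-commutativity of the matrix products as the quadratic term $2\hat{\psi}\Sigma^{\top}\Sigma\hat{\psi}$ is reconstructed. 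The terminal conditions are immediate: $V(T)=0$ forces $\hat{\psi}(T,\vartheta)=\Psi(T)\vartheta\Psi(T)^{\top}=\vartheta$ and $\hat{\phi}(T,\vartheta)=\frac{\delta}{2}\ln\dete[I_d]=0$, matching the data prescribed above.

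Having solved the system, I would evaluate at $t=0$ to obtain
\begin{equation*}
    \pE_x\big[e^{-\tr(\vartheta X_T)}\big] = \dete[I_d + 2\vartheta V(0)]^{-\delta/2}
        \exp\Big\{-\tr\big[\vartheta(I_d + 2V(0)\vartheta)^{-1}\Psi(0)^{\top}x\Psi(0)\big]\Big\}.
\end{equation*}
Finally I would compare this expression term by term with the noncentral Wishart Laplace transform from Proposition \ref{prop:wislaplace}, reading off the covariance matrix $C = V(0)$ and the relation $C\Omega = \Psi(0)^{\top}x\Psi(0)$, hence $\Omega = V(0)^{-1}\Psi(0)^{\top}x\Psi(0)$. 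Since the Laplace transform determines the law uniquely, this identifies $X_T$ as $\mathcal{W}_{d}(\delta, V(0), V(0)^{-1}\Psi(0)^{\top}x\Psi(0))$ and completes the argument.
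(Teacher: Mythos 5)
Your proposal follows essentially the same route as the paper: derive the exponentially affine Laplace transform (the paper cites the standard argument of Gourieroux and Sufana where you sketch the It\^o/martingale derivation), linearize the Riccati system via the ansatz $\hat{\psi}(t,\vartheta) = \Psi(t)\vartheta(I_d + 2V(t)\vartheta)^{-1}\Psi(t)^{\top}$, $\hat{\phi}(t,\vartheta) = \frac{\delta}{2}\ln\dete[I_d + 2\vartheta V(t)]$, and match the result against Proposition \ref{prop:wislaplace}. The argument is correct and matches the paper's proof in all essentials.
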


\subsection{Details of calculation of (\ref{eq:condChfH})}\label{sec:app2}

In this section, we provide details of the calculation of the conditional characteristic function (\ref{eq:condChfH}).
The first equation in the system (\ref{eq:systemheston}) is the classical Riccati equation, and
its closed-form solution is well-known (e.g., see Section 10.7.2 of Filipovi\'{c} \cite{Fi09}). In particular,
\begin{eqnarray*}
    \psi(t,u) &=& - \frac{ u(u+1) (e^{\eta(u)(T-t)} - 1)}
        { \eta(u) (e^{\eta(u)(T-t)} + 1) + (\kappa + u \sigma \rho)(e^{\eta(u)(T-t)} - 1)},\\
    \int_t^T \psi(s,u) d s &=& -\frac{2}{\sigma^2}
        \log \left( \frac{ 2 \eta(u) e^{\frac{1}{2} (\eta(u) + \kappa + u \sigma \rho)(T-t)}}
        {\eta(u) (e^{\eta(u) (T-t)} + 1) + (\kappa + u \sigma \rho)(e^{\eta(u) (T-t)} - 1)} \right),
\end{eqnarray*}
where $\eta(u) = \sqrt{(\kappa + u \sigma \rho)^2 - \sigma^2 u(u+1)}$. It follows that
\begin{equation*}
    e^{-\phi(t,u)} = \left( \frac{ 2 \eta(u) e^{\frac{1}{2} (\eta(u) + \kappa + u \sigma \rho)(T-t)}}
        {\eta(u) (e^{\eta(u) (T-t)} + 1) + (\kappa + u \sigma \rho)(e^{\eta(u) (T-t)} - 1)} \right)^{\delta/2} e^{- u r (T-t)}.
\end{equation*}
The solution of the third linear equation is
\begin{eqnarray}
    \Psi(t,u) &=& \exp\big\{ \textstyle - \frac{1}{2} \int_t^T (\kappa + u \sigma \rho + \sigma^2 \psi(s,u)) d s\big\}\nonumber\\
    &=& e^{- \frac{1}{2} (\kappa + u \sigma \rho)(T-t)}
        \exp\big\{\textstyle - \frac{\sigma^2}{2} \int_t^T \psi(s,u) d s \big\}\nonumber\\
    &=& e^{- \frac{1}{2} (\kappa + u \sigma \rho)(T-t)}
        \frac{ 2 \eta(u) e^{\frac{1}{2} (\eta(u) + \kappa + u \sigma \rho)(T-t)}}
        {\eta(u) (e^{\eta(u) (T-t)} + 1) + (\kappa + u \sigma \rho)(e^{\eta(u) (T-t)} - 1)}\nonumber\\
    &=& \frac{ 2 \eta(u) e^{\frac{1}{2} \eta(u)(T-t)}}
        {\eta(u) (e^{\eta(u) (T-t)} + 1) + (\kappa + u \sigma \rho)(e^{\eta(u) (T-t)} - 1)}\label{eq:PsiH}
\end{eqnarray}
A direct integration shows that
\begin{equation}\label{eq:VH}
    V(t,u) = \frac{1}{2} \frac{\sigma^2 (e^{\eta(u) (T-t)} - 1)}{\eta(u)(e^{\eta(u) (T-t)} + 1) + (\kappa + u \sigma \rho)(e^{\eta(u) (T-t)} - 1)}.
\end{equation}
We divide (\ref{eq:PsiH}) by (\ref{eq:VH}) to have
\begin{equation*}
    \frac{\Psi(0,u)}{V(0,u)} = \frac{4 \eta(u) e^{0.5 \eta(u) T}}{\sigma^2(e^{\eta(u)T} - 1)}
        = \frac{4 \eta(u) e^{ - 0.5 \eta(u) T}}{\sigma^2(1 - e^{-\eta(u)T})}.
\end{equation*}
In particular,
\begin{equation*}
    \frac{\Psi(0,0)}{V(0,0)} = \frac{4 \kappa e^{-0.5\kappa T}}{\sigma^2(1 - e^{-\kappa T})}.
\end{equation*}
Observe that the relation between $\phi(t,u)$ and $\Psi(t,u)$:
\begin{equation*}
    e^{-\phi(t,u)} = (\Psi(t,u))^{\delta/2} \exp\Big\{\textstyle \frac{\delta}{4}(\kappa + u \sigma \rho)(T-t) - u r (T-t)\Big\}.
\end{equation*}
Remind the relationship between the hypergeometric functions and the modified Bessel functions:
\begin{equation*}
    {}_0 F_1\Big(\nu + 1; {\textstyle \frac{1}{4} x^2}\Big) = (x/2)^{-\nu} \Gamma(\nu + 1) I_{\nu}(x).
\end{equation*}
Using the identities above, we found that
\begin{eqnarray*}
    \lefteqn{\Big(\frac{V(0,0)}{V(0,u)}\Big)^{\delta/2} \exp\big\{-\phi(0,u)\big\} \frac{ {}_0 F_1\big({\textstyle \frac{1}{2} \delta};
        {\textstyle \frac{1}{4} \big(\frac{\Psi(0,u)}{V(0,u)}\sqrt{x x_{\scriptscriptstyle T}}\big)^2}\big)}
        { {}_0 F_1\big({\textstyle \frac{1}{2} \delta};
        {\textstyle \frac{1}{4} \big(\frac{\Psi(0,0)} {V(0,0)}\sqrt{x x_{\scriptscriptstyle T}}\big)^2}\big)}}\\
    &=& \Big(\frac{V(0,0)}{V(0,u)}\Big)^{\delta/2} \Big(\frac{\Psi(0,u)}{V(0,u)}\Big)^{-\delta/2+1} \Big(\frac{\Psi(0,0)}{V(0,0)}\Big)^{\delta/2-1}\\
    & & \times \big(\Psi(0,u)\big)^{\delta/2}\exp\big\{{\textstyle \frac{\delta}{4}(\kappa + u \sigma \rho)T - u r T}\big\}
        \frac{I_{0.5\delta-1}(\sqrt{x x_{\scriptscriptstyle T}}\frac{\Psi(0,u)}{V(0,u)})}
        {I_{0.5\delta-1}(\sqrt{x x_{\scriptscriptstyle T}}\frac{\Psi(0,0)}{V(0,0)})}\\
    &=& V(0,0) \frac{\Psi(0,u)}{V(0,u)} \exp\big\{{\textstyle \frac{1}{2} \kappa T - u(r - \frac{\kappa \theta \rho}{\sigma})T}\big\}
        \frac{I_{0.5\delta-1}(\sqrt{x x_{\scriptscriptstyle T}}\frac{\Psi(0,u)}{V(0,u)})}
        {I_{0.5\delta-1}(\sqrt{x x_{\scriptscriptstyle T}}\frac{\Psi(0,0)}{V(0,0)})}\\
    &=& \frac{\eta(u)(1 - e^{-\kappa T})}{\kappa(1 - e^{-\eta(u) T})}
        \exp\big\{{\textstyle - u (r - \frac{\kappa\theta\rho}{\sigma})T - \frac{1}{2}(\eta(u) - \kappa)T}\big\}
        \frac{I_{0.5\delta - 1}\big[ \sqrt{x x_{\scriptscriptstyle T}}
        \frac{4\eta(u)e^{-0.5\eta(u)T}}{ \sigma^2(1-e^{-\eta(u)T}) }\big] }
        {I_{0.5\delta - 1}\big[ \sqrt{x x_{\scriptscriptstyle T}} \frac{4\kappa e^{-0.5\kappa T}}{ \sigma^2(1-e^{-\kappa T}) } \big] }.
\end{eqnarray*}
Recall that $\eta(u)^2 = (\kappa + u \sigma \rho)^2 - \sigma^2 u(u+1)$. Using this identity, we have
\begin{eqnarray*}
    \lefteqn{2\psi(0,u) + \frac{\Psi(0,u)^2}{V(0,u)} = \frac{ -2 u(u+1) (e^{\eta(u) T} - 1)}
        { \eta(u) (e^{\eta(u) T} + 1) + (\kappa + u \sigma \rho)(e^{\eta(u) T} - 1)}}\\
    & & +\frac{4 \eta(u) e^{ - 0.5 \eta(u) T}}{\sigma^2(1 - e^{-\eta(u)T})}
        \frac{ 2 \eta(u) e^{0.5 \eta(u) T}}
        {\eta(u) (e^{\eta(u) T} + 1) + (\kappa + u \sigma \rho)(e^{\eta(u) T} - 1)}\\
    &=& \frac{ -2 \sigma^2 u(u+1) (e^{\eta(u) T} - 1)(1 - e^{-\eta(u)T}) + 8 \eta(u)^2}
        {\sigma^2(1 - e^{-\eta(u)T})(\eta(u) (e^{\eta(u) T} + 1) + (\kappa + u \sigma \rho)(e^{\eta(u) T} - 1))}\\
    &=& \frac{ -2 \sigma^2 u(u+1) (1 - e^{-\eta(u)T})^2 + 8 \eta(u)^2 e^{-\eta(u)T}}
        {\sigma^2(1 - e^{-\eta(u)T})(\eta(u) (1 + e^{-\eta(u)T}) + (\kappa + u \sigma \rho)(1 - e^{-\eta(u)T}))}\\
    &=& \frac{ -2 \sigma^2 u(u+1) - 4 \sigma^2 u(u+1) e^{-\eta(u) T} -2 \sigma^2 u(u+1) e^{-2\eta(u) T}
        + 8 (\kappa + u\sigma \rho)^2 e^{-\eta(u)T}}
        {\sigma^2(1 - e^{-\eta(u)T})(\eta(u) (1 + e^{-\eta(u)T}) + (\kappa + u \sigma \rho)(1 - e^{-\eta(u)T}))}\\
    &=& \frac{ -2 \sigma^2 u(u+1)(1 + e^{-\eta(u) T})^2 + 8 (\kappa + u\sigma \rho)^2 e^{-\eta(u)T}}
        {\sigma^2(1 - e^{-\eta(u)T})(\eta(u) (1 + e^{-\eta(u)T}) + (\kappa + u \sigma \rho)(1 - e^{-\eta(u)T}))}\\
    &=& \frac{2}{\sigma^2}\frac{ (\eta(u)^2 - (\kappa + u \sigma \rho)^2) (1 + e^{-\eta(u) T})^2 + 4 (\kappa + u\sigma \rho)^2 e^{-\eta(u)T}}
        {(1 - e^{-\eta(u)T})(\eta(u) (1 + e^{-\eta(u)T}) + (\kappa + u \sigma \rho)(1 - e^{-\eta(u)T}))}\\
    &=& \frac{2}{\sigma^2}\frac{ \eta(u)^2 (1 + e^{-\eta(u) T})^2 - (\kappa + u\sigma \rho)^2 (1 - e^{-\eta(u)T})^2}
        {(1 - e^{-\eta(u)T})(\eta(u) (1 + e^{-\eta(u)T}) + (\kappa + u \sigma \rho)(1 - e^{-\eta(u)T}))}\\
    &=& \frac{2}{\sigma^2}\frac{ \eta(u) (1 + e^{-\eta(u) T}) - (\kappa + u\sigma \rho) (1 - e^{-\eta(u)T})}
        {1 - e^{-\eta(u)T}}.
\end{eqnarray*}
And
\begin{equation*}
    \frac{1}{V(0,0)} = \frac{2}{\sigma^2}\frac{2\kappa}{1 - e^{-\kappa T}}
        \hspace{0.5cm} \text{ and } \hspace{0.5cm}
        \frac{\Psi(0,0)^2}{V(0,0)} = \frac{2}{\sigma^2}\frac{2\kappa e^{-\kappa T}}{1 - e^{-\kappa T}}.
\end{equation*}
By substituting above quantities into the formula (\ref{eq:condChf}), we found the closed-form expression (\ref{eq:condChfH}) for the conditional
Laplace transform of the Heston model.

\subsection{Proof of Proposition \ref{prop:hyperub}}\label{sec:app1}
We start with a simple lemma.
\begin{lem}\label{lemma:1}
Let $b > \frac{1}{2}(d - 1)$. Suppose $\hat{\iota}(k)$ is the smallest partition among the partitions of $k$ into not more than $d$ parts. Then
\begin{equation*}
    0 < (b)_{\hat{\iota}(k)} \le (b)_{\iota}
\end{equation*}
for all partitions $\iota$ of $k$ into not more than $d$ parts.
\begin{proof}
Fix $k$ and we simplify the notation $\hat{\iota}(k)$ as $\hat{\iota}$.
The factors of the hypergeometric coefficients are of the form
\begin{equation*}
    0 < b  - \frac{1}{2}(d-1) \le b + l - \frac{1}{2}(j-1) \le b + k - 1,
\end{equation*}
for $j = 1, \cdots, d$, and $l = 0, \cdots, k_j - 1$. Thus, we have $(b)_{\hat{\iota}} > 0$.
For a partition $\iota = (k_1,\ldots, k_d)$, define $\alpha(\iota) := \max \{k_i - k_j : 1\le i < j \le d\}$ and
$\beta(\iota) := \min\{j-i: k_i - k_j = \alpha(\iota), i < j\}$. Assume that $\iota$ is not the smallest one.
Then $\alpha(\iota) \ge 2$. Let $\alpha(\iota) = k_{i^*} - k_{j^*}$ and $\beta(\iota) = j^*-i^*$.
We define a new partition $\iota' = (k_1, \ldots, k_{i^*-1}, k_{i^*}-1, k_{i^*+1},\ldots, k_{j^*-1}, k_{j^*}+1, k_{j^*+1}, \ldots, k_d)$.
Then $(b)_{\iota'} = (b)_{\iota} \times \frac{b+k_{j^*} - (j^*+1)/2}{b + k_{i^*} - (i^*+1)/2} < (b)_{\iota}$. For the new partition $\iota'$,
if we have $\alpha(\iota') = \alpha(\iota)$, then $\beta(\iota') = \beta(\iota)+2$. Since $\beta(\cdot) \le d-1$,
we should have a partition with decreased $\alpha$ value after some iterations.
Hence, we obtain a partition $\hat{\iota}$ with $\alpha(\hat{\iota}) \le 1$ in finite steps,
which is the smallest one. Since the new partitions always have smaller hypergeometric coefficient values,
we conclude $(b)_{\hat{\iota}} < (b)_{\iota}$ for any non-smallest partition $\iota$.
%For the inequality $(b)_{\hat{\iota}(m)} \le (b)_{\iota}$, let $\iota$ be a partition of $m$ into $d$ parts. First enumerate all the factors of $(b)_{\hat{\iota}(m)}$ and $(b)_{\iota}$. Then, put factors of each of $(b)_{\hat{\iota}(m)}$ and $(b)_{\iota}$ in the ascending order. Note that the numbers of the factors of $(b)_{\hat{\iota}(m)}$ and $(b)_{\iota}$ are the same, and the factor of $(b)_{\hat{\iota}(m)}$ at each place is less than or equal to that of $(b)_{\iota}$ at the same place. Since all the factors are positive, this observation proves the inequality.
\end{proof}
\end{lem}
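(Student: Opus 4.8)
The plan is to establish the two assertions of the lemma separately: the strict positivity $(b)_{\hat{\iota}(k)} > 0$, and then the minimality $(b)_{\hat{\iota}(k)} \le (b)_{\iota}$ over all partitions $\iota$ of $k$ into at most $d$ parts. Positivity is immediate from the definition: writing $(b)_{\iota} = \prod_{j=1}^d \prod_{l=0}^{k_j - 1}\big(b - \tfrac{1}{2}(j-1) + l\big)$, every factor is bounded below by its smallest instance $b - \tfrac{1}{2}(d-1)$, which is strictly positive by the hypothesis $b > \tfrac{1}{2}(d-1)$. Hence each $(b)_{\iota}$ is a product of positive numbers, and in particular $(b)_{\hat{\iota}(k)} > 0$; this already settles the first inequality for every partition at once.

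For the minimality I would use a balancing (box-shifting) argument. The idea is to show that any $\iota = (k_1,\dots,k_d)$ that is \emph{not} the most evenly distributed partition can be transformed, by moving a single box from a larger part to a strictly smaller one, into another legitimate partition $\iota'$ with $(b)_{\iota'} < (b)_{\iota}$. Concretely, if $\iota$ is not balanced then some pair of parts differs by at least $2$; among all pairs $i < j$ that realize the maximal difference $k_i - k_j$, I select one with the smallest span $j - i$, say $i^\ast < j^\ast$, and define $\iota'$ by replacing $k_{i^\ast}$ with $k_{i^\ast}-1$ and $k_{j^\ast}$ with $k_{j^\ast}+1$. Since only the topmost factor of the $i^\ast$-th block is deleted and a single new factor is appended to the $j^\ast$-th block, the coefficient changes by the single ratio
\begin{equation*}
    \frac{(b)_{\iota'}}{(b)_{\iota}} = \frac{\,b + k_{j^\ast} - \tfrac{1}{2}(j^\ast - 1)\,}{\,b + k_{i^\ast} - \tfrac{1}{2}(i^\ast + 1)\,}.
\end{equation*}
Because $i^\ast < j^\ast$ and $k_{i^\ast} - k_{j^\ast} \ge 2$, a short comparison shows the numerator is strictly smaller than the denominator, so the ratio lies in $(0,1)$ and $(b)_{\iota'} < (b)_{\iota}$.

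The delicate point, which I expect to be the main obstacle, is guaranteeing that $\iota'$ is still a genuine weakly decreasing partition, since a careless box move can destroy monotonicity. This is exactly why the pair is chosen with maximal difference and minimal span. If one had $k_{i^\ast} = k_{i^\ast + 1}$, the pair $(i^\ast + 1, j^\ast)$ would realize the same maximal difference with smaller span, contradicting minimality; hence $k_{i^\ast} > k_{i^\ast + 1}$, so decreasing $k_{i^\ast}$ by one keeps the sequence ordered. Symmetrically $k_{j^\ast - 1} > k_{j^\ast}$, so increasing $k_{j^\ast}$ by one is harmless. Thus the chosen move always returns a valid partition.

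Finally, to see that iterating the move reaches the balanced partition in finitely many steps, I would track the monovariant $\sum_{j} k_j^2$: the move changes it by $-2(k_{i^\ast} - k_{j^\ast}) + 2 \le -2$, so it strictly decreases at each step and, being a nonnegative integer, forces termination. The process can only halt at a partition all of whose parts differ by at most $1$, which is precisely the most evenly distributed one, i.e.\ the lexicographically smallest partition $\hat{\iota}(k)$. Since every move strictly lowered the coefficient, we obtain $(b)_{\hat{\iota}(k)} < (b)_{\iota}$ for each non-smallest $\iota$, and together with positivity this yields the asserted chain $0 < (b)_{\hat{\iota}(k)} \le (b)_{\iota}$, with equality exactly when $\iota = \hat{\iota}(k)$.
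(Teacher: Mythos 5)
Your proof is correct and follows essentially the same route as the paper: the identical max-difference/minimal-span box move, the same single-factor ratio argument showing $(b)_{\iota'} < (b)_{\iota}$, and iteration down to the balanced partition $\hat{\iota}(k)$; your only departures --- the explicit check that $\iota'$ remains weakly decreasing (just note that when $j^\ast = i^\ast + 1$ the ordering after the move needs $k_{i^\ast} - 1 \ge k_{j^\ast} + 1$, which holds since $\alpha(\iota)\ge 2$) and the monovariant $\sum_j k_j^2$ for termination in place of the paper's $(\alpha,\beta)$ bookkeeping --- are harmless, and the latter is arguably cleaner. Incidentally, your ratio numerator $b + k_{j^\ast} - \tfrac{1}{2}(j^\ast - 1)$ is the correct factor (the paper's printed $(j^\ast+1)/2$ there appears to be a typo), and in either form the strict inequality holds because numerator minus denominator equals $-(k_{i^\ast}-k_{j^\ast}) + \tfrac{1}{2}\big(2-(j^\ast-i^\ast)\big) < 0$.
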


For reader's better understanding, we give an example which illustrates the idea of the above proof. Consider the case $d = 5$ and $k = 8$.
In this case, the smallest partition is
\begin{equation*}
    \hat{\iota} = \hat{\iota}(8) = (2,2,2,1,1).
\end{equation*}
We start with a partition which is not the smallest, say $\iota = (4,3,1,0,0)$.
From this partition, we successively looking for a new partition with a smaller hypergeometric coefficients until we arrive at the smallest one:
\begin{equation*}
    \begin{array}{ccc}
        \begin{array}{|c|c|c|c|c|}
            \multicolumn{5}{c}{(4,3,1,0,0), \hspace{0.15cm} \alpha(\iota) = 4, \beta(\iota) = 3}\vspace{0.2cm}\\
            1& 2 & 3& 4& 5\\
            b + 3& & & &\\
            b + 2& b + \frac{3}{2}& & &\\
            b + 1& b + \frac{1}{2}&  & &\\
            b & b - \frac{1}{2} & b - 1 & \hspace{0.8cm}& \hspace{0.8cm} \\
        \end{array}
        & \longrightarrow &
        \begin{array}{|c|c|c|c|c|}
            \multicolumn{5}{c}{(3, 3, 1, 1, 0), \hspace{0.15cm} \alpha(\iota) = 3, \beta(\iota) = 3}\vspace{0.2cm}\\
            1& 2 & 3& 4& 5\\
            & & & &\\
            b + 2& b + \frac{3}{2}& & &\\
            b + 1& b + \frac{1}{2}& b & &\\
            b    & b - \frac{1}{2}& b - 1 & b - 2& \hspace{0.8cm} \\
        \end{array}
    \end{array}
\end{equation*}

\begin{equation*}
    \begin{array}{cccc}
        \longrightarrow &
        \begin{array}{|c|c|c|c|c|}
            \multicolumn{5}{c}{(3,2,1,1,1), \hspace{0.15cm} \alpha(\iota) = 2, \beta(\iota) = 2}\vspace{0.2cm}\\
            1& 2 & 3& 4& 5\\
            & & & &\\
            b + 2& & & &\\
            b + 1& b + \frac{1}{2}& & &\\
            b & b - \frac{1}{2} & b - 1 & b - 2& b - 3 \\
        \end{array}
        & \longrightarrow &
        \begin{array}{|c|c|c|c|c|}
            \multicolumn{5}{c}{(2, 2, 2, 1, 1), \hspace{0.15cm} \alpha(\hat{\iota}) = 1, \beta(\hat{\iota}) = 1}\vspace{0.2cm}\\
            1& 2 & 3& 4& 5\\
            & & & &\\
            & & & &\\
            b + 1& b + \frac{1}{2}& b & &\\
            b & b - \frac{1}{2} & b - 1 & b - 2& b - 3 \\
        \end{array}
    \end{array}
\end{equation*}
\noindent At each stage, the values of the hypergeometric coefficients decrease, and the smallest partition has the minimal hypergeometric coefficient.
The next lemma is about comparisons between hypergeometric coefficients of partitions with different sizes.

\begin{lem}\label{lemma:2}
Let $b > \frac{1}{2}(d-1)$ and $k \ge d$. Then
\begin{equation*}
    (b)_{\hat{\iota}(k)} < (b)_{\hat{\iota}(k+1)}.
\end{equation*}
\begin{proof}
This lemma is a simple consequence of the recurrence relation (\ref{eq:hyperco}). We write $\hat{\iota}(k) = (k_1, \cdots, k_d)$. Since $k \ge d$,
we have $k_j \ge 1$ for all $j = 1, \cdots, d$. Therefore,
\begin{equation*}
    1 \le k_j < b + k_j - \frac{1}{2}(d-1) \le b + k_j - \frac{1}{2}(j - 1),
\end{equation*}
for all $j = 1, \cdots, d$. This observation and the recurrence relation (\ref{eq:hyperco}) complete the proof.
\end{proof}
\end{lem}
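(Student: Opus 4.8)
The plan is to read off from the recurrence relation (\ref{eq:hyperco}) the single multiplicative factor that relates the two consecutive coefficients, and then show that this factor strictly exceeds $1$. Writing $\hat{\iota}(k) = (k_1, \ldots, k_d)$, the recurrence states that $(b)_{\hat{\iota}(k+1)}$ is obtained from $(b)_{\hat{\iota}(k)}$ by multiplying by exactly one factor, namely $b + k_1$ in the case $k_1 = \cdots = k_d$, or $b + k_j - \frac{1}{2}(j-1)$ in the case $k_1 = \cdots = k_{j-1} \neq k_j = \cdots = k_d$. Since Lemma \ref{lemma:1} already guarantees $(b)_{\hat{\iota}(k)} > 0$, it suffices to verify that this factor is strictly greater than $1$.

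First I would establish that every part of the smallest partition is at least $1$. Because $\hat{\iota}(k)$ is the most evenly distributed partition of $k$ into at most $d$ parts (as noted just after Proposition \ref{prop:hyperub}), and because the hypothesis $k \ge d$ forces each of the $d$ parts to receive at least one unit, we have $k_j \ge 1$ for every $j = 1, \ldots, d$. This is the one place where the assumption $k \ge d$ is used, and it is the crux of the argument.

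With $k_j \ge 1$ in hand, I would then bound the multiplicative factor. Since $j \le d$ we have $\frac{1}{2}(j-1) \le \frac{1}{2}(d-1)$, and the hypothesis $b > \frac{1}{2}(d-1)$ gives $b - \frac{1}{2}(d-1) > 0$; chaining these yields
\begin{equation*}
    1 \le k_j < \Big(b - \tfrac{1}{2}(d-1)\Big) + k_j \le b + k_j - \tfrac{1}{2}(j-1),
\end{equation*}
so the factor exceeds $1$. (The all-equal case, whose factor is $b + k_1$, is subsumed by taking $j = 1$.) Multiplying the positive quantity $(b)_{\hat{\iota}(k)}$ by a factor strictly larger than $1$ gives $(b)_{\hat{\iota}(k)} < (b)_{\hat{\iota}(k+1)}$, which is the claim.

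The argument is essentially a one-line consequence of the recurrence, so there is no serious obstacle; the only point that requires care is justifying $k_j \ge 1$, that is, confirming that the even-distribution description of $\hat{\iota}(k)$ indeed leaves no part equal to zero once $k \ge d$. Everything else is a routine comparison of the constant factor against $1$ using the two standing hypotheses $b > \frac{1}{2}(d-1)$ and $k \ge d$.
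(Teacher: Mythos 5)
Your proof is correct and follows essentially the same route as the paper's: both read off from the recurrence (\ref{eq:hyperco}) the single multiplicative factor relating $(b)_{\hat{\iota}(k)}$ to $(b)_{\hat{\iota}(k+1)}$, use $k \ge d$ to get $k_j \ge 1$ for every part of the evenly distributed smallest partition, and chain $1 \le k_j < b + k_j - \frac{1}{2}(d-1) \le b + k_j - \frac{1}{2}(j-1)$ to show the factor exceeds $1$. Your explicit remarks that positivity comes from Lemma \ref{lemma:1} and that the all-equal case is the $j=1$ instance merely make precise what the paper leaves implicit.
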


Now we consider the zonal polynomials. Zonal polynomials are polynomials of eigenvalues of the matrix, and their coefficients are all nonnegative
(see the recurrence relation (14) in page 234 of Muirhead \cite{Mu82}). Therefore,
\begin{equation*}
    |C_{\iota}(\alpha_1, \cdots, \alpha_d)| \le C_{\iota}(|\alpha_1|, \cdots, |\alpha_d|),
\end{equation*}
for all partitions $\iota$, and all complex numbers $\alpha_1, \cdots, \alpha_d$.

\begin{proof}[Proof of Proposition \ref{prop:hyperub}]
The proof is a combination of the inequalities which are established in Lemma \ref{lemma:1} and \ref{lemma:2}. Observe that
\begin{eqnarray*}
    \lefteqn{|{}_0 F_1(b;y) - {}_0^m F_{1}^{}(b;y)|
        = \Big|\sum_{k = m+1}^{\infty} \sum_{|\iota| = k} \frac{1}{(b)_{\iota}} \frac{C_{\iota}(y)}{k!}\Big|
        \le \sum_{k = m+1}^{\infty} \sum_{|\iota| = k} \frac{1}{(b)_{\iota}} \frac{|C_{\iota}(y)|}{k!}}\\
    & &\le \sum_{k = m+1}^{\infty} \frac{1}{k! (b)_{\hat{\iota}(k)}} \sum_{|\iota| = k} C_{\iota}(|\alpha_1|, \cdots, |\alpha_d|)
        \le \frac{1}{(b)_{\hat{\iota}(m+1)}} \sum_{k = m+1}^{\infty} \frac{1}{k!}\sum_{|\iota| = k} C_{\iota}(|\alpha_1|, \cdots, |\alpha_d|).
\end{eqnarray*}
By definition, zonal polynomials are normalized so that
\begin{equation*}
    \sum_{|\iota| = k} C_{\iota}(\beta_1, \cdots, \beta_d) = (\beta_1 + \cdots + \beta_d)^k
\end{equation*}
for all complex numbers $\beta_1, \cdots, \beta_d$ (see Definition 7.2.1 of Muirhead \cite{Mu82}). Therefore, we have
\begin{eqnarray*}
    |{}_0 F_1(b;y) - {}_0^m F_{1}^{}(b;y)|
        &\le& \frac{1}{(b)_{\hat{\iota}(m+1)}} \sum_{k = m+1}^{\infty} \frac{1}{k!}(|\alpha_1| + \cdots + |\alpha_d|)^k\\
    &\le& \frac{1}{(b)_{\hat{\iota}(m+1)}} \sum_{k=0}^{\infty} \frac{1}{(k+m+1)!}(|\alpha_1| + \cdots + |\alpha_d|)^{k+m+1}\\
    &\le& \frac{(|\alpha_1| + \cdots + |\alpha_d|)^{m+1}}{(m+1)!(b)_{\hat{\iota}(m+1)}}
        \sum_{k=0}^{\infty} \frac{1}{k!}(|\alpha_1| + \cdots + |\alpha_d|)^k\\
    &=& \frac{(|\alpha_1| + \cdots + |\alpha_d|)^{m+1}}{(m+1)!(b)_{\hat{\iota}(m+1)}} e^{|\alpha_1| + \cdots + |\alpha_d|}.
\end{eqnarray*}
\end{proof}

\bibliographystyle{abbrv}
\bibliography{reference}

\begin{thebibliography}{10}

\bibitem{AW92}
J.~Abate and W.~Whitt.
\newblock The {F}ourier-series method for inverting transforms of probability
  distributions.
\newblock {\em Queueing Systems Theory Appl.}, 10(1-2):5--87, 1992.

\bibitem{AA10}
A.~{Ahdida} and A.~{Alfonsi}.
\newblock {Exact and high order discretization schemes for Wishart processes
  and their affine extensions}.
\newblock {\em Ann. Appl. Probab.}, 23(3):1025--1073, 2013.

\bibitem{Am86}
D.~E. Amos.
\newblock Algorithm 644: A portable package for bessel functions of a complex
  argument and nonnegative order.
\newblock {\em ACM Trans. Math. Softw.}, 12(3):265--273, Sept. 1986.

\bibitem{Ba96}
D.~Bates.
\newblock Jumps and stochastic volatility: exchange rate processes implicit in
  deutsche mark options.
\newblock {\em Rev. Financ. Stud.}, 9(1):69--107, 1996.

\bibitem{Ba00}
D.~S. Bates.
\newblock Post-'87 crash fears in the s\&p 500 futures option market.
\newblock {\em Journal of Econometrics}, 94(1-2):181 -- 238, 2000.

\bibitem{BBE08}
A.~Benabid, H.~Bensusan, N.~El~Karoui, et~al.
\newblock Wishart stochastic volatility: Asymptotic smile and numerical
  framework.
\newblock preprint, 2008.

\bibitem{BK06}
M.~Broadie and {\"O}.~Kaya.
\newblock Exact simulation of stochastic volatility and other affine jump
  diffusion processes.
\newblock {\em Oper. Res.}, 54(2):217--231, 2006.

\bibitem{Br91}
M.-F. Bru.
\newblock Wishart processes.
\newblock {\em J. Theoret. Probab.}, 4(4):725--751, 1991.

\bibitem{CM99}
P.~Carr, D.~B. Madan, and R.~H. Smith.
\newblock Option valuation using the fast fourier transform.
\newblock {\em Journal of Computational Finance}, 2:61--73, 1999.

\bibitem{CW07}
P.~Carr and L.~Wu.
\newblock Stochastic skew in currency options.
\newblock {\em Journal of Financial Economics}, 86(1):213 -- 247, 2007.

\bibitem{CHJ09}
P.~Christoffersen, S.~Heston, and K.~Jacobs.
\newblock The shape and term structure of the index option smirk: Why
  multifactor stochastic volatility models work so well.
\newblock {\em Manage. Sci.}, 55(12):1914--1932, Dec. 2009.

\bibitem{CFMT11}
C.~Cuchiero, D.~Filipovi{\'c}, E.~Mayerhofer, and J.~Teichmann.
\newblock Affine processes on positive semidefinite matrices.
\newblock {\em Ann. Appl. Probab.}, 21(2):397--463, 2011.

\bibitem{DG11}
J.~da~Fonseca and M.~Grasselli.
\newblock Riding on the smiles.
\newblock {\em Quant. Finance}, 11(11):1609--1632, 2011.

\bibitem{DGT08}
J.~Da~Fonseca, M.~Grasselli, and C.~Tebaldi.
\newblock A multifactor volatility {H}eston model.
\newblock {\em Quant. Finance}, 8(6):591--604, 2008.

\bibitem{DP80}
J.~R. Dormand and P.~J. Prince.
\newblock A family of embedded {R}unge-{K}utta formulae.
\newblock {\em J. Comput. Appl. Math.}, 6(1):19--26, 1980.

\bibitem{DPS00}
D.~Duffie, J.~Pan, and K.~Singleton.
\newblock Transform analysis and asset pricing for affine jump-diffusions.
\newblock {\em Econometrica}, 68(6):1343--1376, 2000.

\bibitem{Fi09}
D.~Filipovi{\'c}.
\newblock {\em Term-structure models}.
\newblock Springer Finance. Springer-Verlag, Berlin, 2009.
\newblock A graduate course.

\bibitem{GP11}
P.~Gauthier and D.~Possama\"{\i}.
\newblock Prices expansion in the wishart model.
\newblock {\em The IUP Journal of Computational Mathematics}, 4(1):44--71,
  2011.

\bibitem{GP12}
P.~Gauthier and D.~Possama\"{\i}.
\newblock Efficient simulation of the wishart model.
\newblock {\em The IUP Journal of Computational Mathematics}, 5(1):14--58,
  2012.

\bibitem{Gl04}
P.~Glasserman.
\newblock {\em Monte {C}arlo methods in financial engineering}, volume~53 of
  {\em Applications of Mathematics (New York)}.
\newblock Springer-Verlag, New York, 2004.
\newblock Stochastic Modelling and Applied Probability.

\bibitem{Gl76}
L.~J. Gleser.
\newblock A canonical representation for the noncentral {W}ishart distribution
  useful for simulation.
\newblock {\em J. Amer. Statist. Assoc.}, 71(355):690--695, 1976.

\bibitem{GS10}
C.~Gourieroux and R.~Sufana.
\newblock Derivative pricing with {W}ishart multivariate stochastic volatility.
\newblock {\em J. Bus. Econom. Statist.}, 28(3):438--451, 2010.

\bibitem{GN00}
A.~K. Gupta and D.~K. Nagar.
\newblock {\em Matrix variate distributions}, volume 104 of {\em Chapman \&
  Hall/CRC Monographs and Surveys in Pure and Applied Mathematics}.
\newblock Chapman \& Hall/CRC, Boca Raton, FL, 2000.

\bibitem{He93}
S.~Heston.
\newblock A closed-form solution for options with stochastic volatility with
  applications to bond and currency options.
\newblock {\em Review of Financial Studies}, 6(2):327--343, 1993.

\bibitem{KoWeb}
P.~Koev.
\newblock \url{http://math.mit.edu/~plamen/software/}.

\bibitem{KE06}
P.~Koev and A.~Edelman.
\newblock The efficient evaluation of the hypergeometric function of a matrix
  argument.
\newblock {\em Math. Comp.}, 75(254):833--846, 2006.

\bibitem{Ks59}
A.~M. Kshirsagar.
\newblock Bartlett decomposition and {W}ishart distribution.
\newblock {\em Ann. Math. Statist.}, 30:239--241, 1959.

\bibitem{LT08}
M.~Leippold and F.~Trojani.
\newblock Asset pricing with matrix jump diffusions.
\newblock {\em Available at SSRN 1274482}, 2008.

\bibitem{Mu82}
R.~J. Muirhead.
\newblock {\em Aspects of multivariate statistical theory}.
\newblock John Wiley \& Sons Inc., New York, 1982.
\newblock Wiley Series in Probability and Mathematical Statistics.

\bibitem{RY99}
D.~Revuz and M.~Yor.
\newblock {\em Continuous martingales and {B}rownian motion}, volume 293 of
  {\em Grundlehren der Mathematischen Wissenschaften [Fundamental Principles of
  Mathematical Sciences]}.
\newblock Springer-Verlag, Berlin, third edition, 1999.

\end{thebibliography}

\end{document}